\DeclarePairedDelimiterX{\barpair}[2]{(}{)}{%
  #1\;\delimsize\|\;#2%
}
\theoremstyle{definition}
\newtheorem{theorem}{Theorem}
\newtheorem{prop}[theorem]{Proposition}
\newtheorem{lemma}[theorem]{Lemma}
\newtheorem{corollary}[theorem]{Corollary}
\newcommand{\Ad}{\text{Ad}}
\newcommand{\id}{\text{id}}
\newcommand{\mf}[1]{\mathfrak{#1}}
\newcommand{\mcal}[1]{\mathcal{#1}}
\newcommand{\jami}{Jamio{\l}kowski }
\newcommand{\mds}[1]{\mathds{#1}}
\newcommand{\supp}[1]{\text{supp}\left(#1\right)}
\newcommand{\zpan}[1]{\text{zpan}_\mds{R}\left(#1\right)}
\newcommand{\seepr}{Proof can be found in Appendix. }
\newcommand{\seeap}{See Appendix for proof. }
\begin{document}

\title{Faithfulness and sensitivity for ancilla-assisted process tomography }

\author{Seok Hyung Lie}
\affiliation{%
 Department of Physics and Astronomy, Seoul National University, Seoul, 151-742, Korea
}%
\affiliation{%
 School of Physical and Mathematical Sciences, Nanyang Technological University, 21 Nanyang Link, Singapore, 637371
}%
\author{Hyunseok Jeong}
\affiliation{%
 Department of Physics and Astronomy, Seoul National University, Seoul, 151-742, Korea
}%

\date{\today}

\begin{abstract}
 A system-ancilla bipartite state capable of containing the complete information of an unknown quantum channel acting on the system is called faithful. The equivalence between faithfulness of state and invertibility of the corresponding Jamio{\l}kowski map proved by D'Ariano and Presti has been a useful characterization for ancilla-assisted process tomography albeit the proof was incomplete as they assumed trace nonincreasing quantum operations, not quantum channels. We complete the proof of the equivalence and introduce the generalization of faithfulness to various classes of quantum channels. We also explore a more general notion we call sensitivity, the property of quantum state being altered by any nontrivial action of quantum channel. We study their relationship by characterizing both properties for important classes of quantum channels such as unital channels, random unitary operations and unitary operations. Unexpected (non-)equivalence results among them shed light on the structure of quantum channels by showing that we need only two classes of quantum states for characterizing quantum states faithful or sensitive to various subclasses of quantum channels. For example, it reveals the relation between quantum process tomography and quantum correlation as it turns out that only bipartite states that has no local classical observable at all can be used to sense the effect of unital channels.
\end{abstract}

\pacs{Valid PACS appear here}
\maketitle

\textit{{Introduction}---\,}Quantum process tomography (QPT) is a task of identifying an unknown quantum process by feeding it with a set of known input states and measuring the corresponding outcomes. For QPT to be successful, the set of input states $\{\rho_n\}$ should be \textit{informationally complete}, i.e., for every quantum process $\mcal{E}$, the corresponding set of output states $\{\mcal{E}(\rho_n)\}$ should be unique. In general, multiple types of input states are required to uniquely identify a quantum process, but by employing the ancilla-assisted process tomography (AAPT) \cite{altepeter2003ancilla,mohseni2008quantum,shukla2014single,teo2020objective,o2004quantum,howard2006quantum,d2004quantum}, it is possible to identify a quantum channel by using copies of only one type of input state albeit it requires correlation between the system and the environment. (See FIG. \ref{fig:aapt}.) In Ref. \cite{d2003imprinting}, when a bipartite quantum system transforms to a different state for every quantum process applied to a part of its systems, it is called \textit{faithful}. In other words, a bipartite quantum state $\rho_{AB}$ is faithful if the mapping from quantum channel $\mcal{E}$ to $\mcal{E}_A(\rho_{AB})$ is injective (one-to-one).

On the other hand, for every bipartite state $\rho_{AB}$, one could define its corresponding \jami map $\rho_{A\to B}$ given as
\begin{equation} \label{eqn:jami}
    \rho_{A\to B}(\sigma)=\Tr_A[(\sigma_A^T\otimes\mds{1}_B)\rho_{AB}].
\end{equation}
It was claimed in Ref. \cite{d2003imprinting} that the invertibility of $\rho_{A\to B}$ and the faithfulness of $\rho_{AB}$ are equivalent. Just as the Choi-\jami isomorphism \cite{choi1975completely,jamiolkowski1972linear} relates static and dynamic pictures of quantum processes, this equivalence is significant as it relates quantum state tomography with quantum process tomography. Moreover. faithfulness is an intuitive method for showing that even mixed states can be faithful.

However, the argument given in Ref. \cite{d2003imprinting} for the equivalence of the faithfulness and the invertibility works only when the faithfulness is assumed to be about quantum operations, not quantum channels. A quantum operation $\mcal{O}(\rho)=\sum_i K_i \rho K_i^\dag$ is a trace non-increasing CP (completely-positive) map, (equivalent to $\sum_i K_i^\dag K_i \leq \mds{1}$), whereas a quantum channel $\mcal{C}(\rho)=\sum_i L_i \rho L_i^\dag$ is a trace-preserving CP map (equivalent to $\sum_i L_i^\dag L_i =\mds{1}$). However, in general, a quantum operation can be implemented, even nondeterministically, only when some information of the environment discarded in the implementation of the map is revealed. It cannot be always justified, hence the assumption of faithfulness to general quantum operations is sometimes too strong, and it is desirable to require faithfulness only for quantum channels.

\begin{figure}[t]
    \includegraphics[width=.4\textwidth]{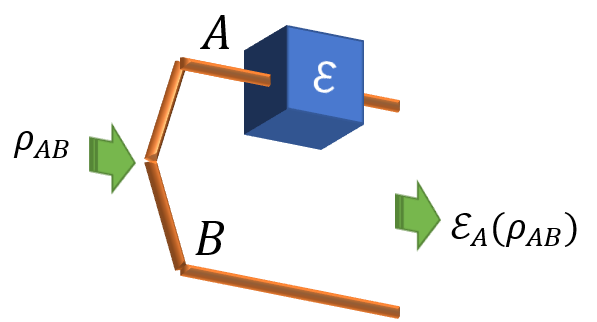}
    \caption{Ancilla-assisted process tomography. Only a part of correlated quantum state $\rho_{AB}$ passes through an unknown channel $\mcal{E}$, but if the state $\rho_{AB}$ is \textit{faithful}, then the channel tomography can be done by the state tomography of $\mcal{E}_A(\rho_{AB})$.}\label{fig:aapt}
\end{figure}

The main problem of the argument in Ref. \cite{d2003imprinting} is that the span of the set of quantum channels does not form the whole space of linear maps whereas the set of quantum operations does, which is required to directly derive the equivalence. Fortunately, we will show that the result does not change even when one requires the faithfulness to hold only for quantum channels. In this work, we provide a complete proof of the original claim of the equivalence of bipartite quantum channel and invertibility of its corresponding \jami map.

Additionally, we study a new property we will call \textit{sensitivity} of bipartite state, a property that requires the state to change whenever a nontrivial quantum process is applied to a part of its systems. This property is useful to study catalysts in quantum resources theories which must not be changed by reaction of resource transformation. We explore the equivalence relations between faithfulness and sensitivity of quantum states to various classes of quantum channels such as unital maps, random unitary maps and unitary operations and their relation to the properties of the corresponding \jami map. 

Identifying the exact condition for being tomographically faithful or sensitive conversely defines the set of states that are \textit{not} faithful or sensitive. It will serve as a cornerstone of establishing the resource theory of tomographical power because they form the set of free states, which is the central part of a resource theory \cite{chitambar2019quantum}.

\textit{Notation}---\,We say a function $f:S\to T$ is injective (one-to-one or left invertible) when $f(x)=f(y)$ implies $x=y$ for all $x,y\in S$ and surjective (onto or right invertible) when for all $y\in T$, there exists $x\in S$ such that $y=f(x)$. Without loss of generality, we sometimes identify the Hilbert space $H_X$ corresponding to a quantum system $X$ with the system itself and use the same symbol $X$ to denote both. We will denote the dimension of $X$ by $|X|$.

The space of all operators acting on system $X$ is denoted by $\mf{B}(X)$, the real space of all Hermitian matrices on system $X$ by $\mf{H}(X)$. The set of all unitary operators in $\mf{B}(X)$ is denoted by $\mf{U}(X)$. For any matrix $M$, $M^T$ is its transpose with respect to some fixed basis. We treat $\mf{B}(X)$ as a Hilbert space equipped with the Hilbert-Schmidt inner product $\langle A,B\rangle :=\Tr[A^\dag B]$.

The space of all linear maps from $\mf{B}(X)$ to $\mf{B}(Y)$ is denoted by $\mf{L}(X, Y)=\mf{B}(\mf{B}(X),\mf{B}(Y))$ and we will used the shorthand notation $\mf{L}(X):=\mf{L}(X, X)$. The set of all quantum channels (completely positive and trace-preserving linear maps) from system $X$ to $Y$ is denoted by $\mf{C}(X , Y)$ with $\mf{C}(X):=\mf{C}(X , X)$. For any linear map $\mcal{N}$ we define its transpose $\mcal{N}^T$ as $\mcal{N}^T(\rho):=\mcal{N}^\dag(\rho^T)^T$. We let the set of unitary operations on $X$ be $\mf{UO}(X)$ and call the channels in the convex hull of $\mf{RU}(X):=\text{conv}(\mf{UO}(X))$ random unitary operations, and let the set of unital quantum channels be $\mf{UC}(X):=\{\mcal{N}\in\mf{C}(X):\mcal{N}(\mds{1}_X)=\mds{1}_X\}$.  We denote the identity map on system $X$ by $\id_X$.

\textit{Tomographic faithfulness}---
 The argument for the equivalence between the faithfulness and the invertibility of Jamio{\l}kowski map given in Ref. \cite{d2003imprinting} is incomplete because the set of quantum channels does not span the whole complex vector space of linear maps. We give a complete proof of that claim here. Before getting into faithfulness of bipartite states, we first investigate the necessary and sufficient condition for injectivity and surjectivity of quantum channels. We first show the equivalence with the invertibility on the space of traceless Hermitian operators. We only work with finite dimensional systems in this work.
 
  To prove our first main result, we need some observation on the structure of the set of linear maps of traceless Hermitian operators that connects trace non-preserving quantum operations and trace-preserving quantum channels.
 
 \begin{lemma} \label{applem:1}
    Any linear map on the space of traceless Hermitian operators can be expressed as a scalar multiple of the difference of two quantum channels.
\end{lemma}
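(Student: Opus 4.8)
\noindent The plan is to identify the scalar multiples of differences of channels with a concrete linear space and then show that every endomorphism of the traceless Hermitian operators lies in it. Write $\mf{H}_0(X)$ for the real space of traceless Hermitian operators, and call a linear map $\mcal{D}$ on $\mf{B}(X)$ \emph{trace-annihilating} if $\Tr[\mcal{D}(\rho)]=0$ for all $\rho$. The difference of any two channels is Hermitian-preserving (it sends Hermitian operators to Hermitian operators) and trace-annihilating, since both channels are trace-preserving; consequently it maps $\mf{H}_0(X)$ into $\mf{H}_0(X)$. The claim I will establish is the converse in the following sharp form: every Hermitian-preserving, trace-annihilating map $\mcal{D}$ equals $c(\mcal{C}_1-\mcal{C}_2)$ for some $\mcal{C}_1,\mcal{C}_2\in\mf{C}(X)$ and scalar $c$.

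First I would reduce the lemma to this claim. Given an arbitrary linear map $\mcal{L}$ on $\mf{H}_0(X)$, extend it to $\tilde{\mcal{L}}$ on all of $\mf{H}(X)$ by setting $\tilde{\mcal{L}}(\mds{1}_X)=0$ and acting as $\mcal{L}$ on the traceless part; then $\tilde{\mcal{L}}$ is Hermitian-preserving and trace-annihilating, and its restriction to $\mf{H}_0(X)$ is $\mcal{L}$. Hence a representation $\tilde{\mcal{L}}=c(\mcal{C}_1-\mcal{C}_2)$ restricts to the desired representation of $\mcal{L}$.

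The geometric heart is to show that the completely depolarizing channel $\mcal{T}(\rho)=\Tr[\rho]\,\mds{1}_X/|X|$ lies in the relative interior of $\mf{C}(X)$ inside the affine space of trace-preserving Hermitian-preserving maps. I would argue this through the Choi representation: the Choi operator of $\mcal{T}$ is $\mds{1}/|X|$, which has full rank and therefore sits strictly inside the positive cone. For any bounded Hermitian-preserving trace-annihilating $\mcal{D}$, the Choi operator of $\mcal{T}+\epsilon\mcal{D}$ is $\mds{1}/|X|+\epsilon J_{\mcal{D}}$ with $J_{\mcal{D}}$ Hermitian, so it remains positive semidefinite for all sufficiently small $\epsilon>0$; trace preservation holds automatically because $\mcal{T}$ is trace-preserving and $\mcal{D}$ is trace-annihilating. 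Thus $\mcal{C}_\pm:=\mcal{T}\pm\epsilon\mcal{D}\in\mf{C}(X)$, and $\mcal{D}=\tfrac{1}{2\epsilon}(\mcal{C}_+-\mcal{C}_-)$, which is exactly the required form with $c=1/(2\epsilon)$.

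I expect the relative-interior claim to be the main obstacle, since it is the only place where positivity interacts with the perturbation; everything else is linear bookkeeping. The supporting fact that the trace-annihilating Hermitian-preserving maps form precisely the linear space parallel to the affine hull of $\mf{C}(X)$ can be confirmed by a dimension count (both have real dimension $|X|^4-|X|^2$), which guarantees that the perturbation argument explores all admissible directions. The same argument works verbatim for maps between different systems $X$ and $Y$, using the depolarizing channel $\rho\mapsto\Tr[\rho]\,\mds{1}_Y/|Y|$, whose Choi operator $\mds{1}_{XY}/|Y|$ is again full rank.
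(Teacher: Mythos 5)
Your proof is correct, but it takes a genuinely different route from the paper's. The paper argues algebraically: it writes the (complexified) Hermitian-preserving extension as $\sum_i\lambda_i\Ad_{V_i}$ with orthonormal $V_i$ and real $\lambda_i$, uses the freedom in defining the map on $\mds{1}_X$ to force $\sum_i\lambda_iV_i^\dag V_i=0$, splits the sum into its positive and negative parts, and completes each part to a genuine channel by appending the single extra Kraus operator $M=\sqrt{\alpha\mds{1}-\sum_{i\in S}\lambda_iV_i^\dag V_i}$. You instead argue geometrically: after the same extension-by-zero on $\mds{1}_X$, you observe that the completely depolarizing channel has full-rank Choi operator, hence lies in the relative interior of $\mf{C}(X)$ within the affine space of trace-preserving Hermitian-preserving maps, so $\mcal{T}\pm\epsilon\mcal{D}$ are both channels for small $\epsilon$ and their difference recovers $2\epsilon\mcal{D}$. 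Both arguments are sound. The paper's construction is explicit (it exhibits Kraus operators and a concrete value of the scalar $\alpha$), while yours is shorter conceptually, extends verbatim to maps between systems of different dimensions, and isolates the one genuinely nontrivial ingredient (an interior point of the channel set) from the linear bookkeeping. One cosmetic remark: to speak of Choi operators and complete positivity you should note that the real-linear extension on $\mf{H}(X)$ is further complexified to a Hermitian-preserving map on $\mf{B}(X)$; this is standard (and the paper elides the same step), so it is not a gap. Note also that your channels $\mcal{T}\pm\epsilon\mcal{D}$ are near the depolarizing channel and hence not invertible as linear maps, so the paper's follow-up remark that the two channels can additionally be chosen invertible would still require its separate mixing-with-identity argument.
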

 
Combining it with other technical Lemmas introduced in Appendix, we can prove one of the main results, which is a slight generalization of the result first implicitly introduced in Ref. \cite{d2003imprinting} with an incomplete argument.

\begin{prop} \label{thm:dariano}
    For any $\mcal{R}\in \mf{C}(X,Y)$,  $\mcal{R}\circ\mcal{N}=\mcal{R}\circ\mcal{M}$ implies $\mcal{N}=\mcal{M}$ for every $\mcal{N},\mcal{M}\in \mf{C}(X)$ (or if $\mcal{N}\circ\mcal{R}=\mcal{M}\circ\mcal{R}$ implies $\mcal{N}=\mcal{M}$ for every $\mcal{N},\mcal{M}\in \mf{C}(Y)$) if and only if $\mcal{R}$ is injective (or surjective).
\end{prop}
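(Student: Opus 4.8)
The plan is to treat the two stated equivalences in parallel, proving each by establishing the easy implication directly and the hard one by contraposition. For the easy directions neither requires the channel structure: if $\mcal{R}$ is injective as a linear map and $\mcal{R}\circ\mcal{N}=\mcal{R}\circ\mcal{M}$, then $\mcal{R}(\mcal{N}(\rho))=\mcal{R}(\mcal{M}(\rho))$ forces $\mcal{N}(\rho)=\mcal{M}(\rho)$ for every $\rho$, so $\mcal{N}=\mcal{M}$; dually, if $\mcal{R}$ is surjective and $\mcal{N}\circ\mcal{R}=\mcal{M}\circ\mcal{R}$, then since every $\sigma\in\mf{B}(Y)$ equals $\mcal{R}(\rho)$ for some $\rho$ we get $\mcal{N}(\sigma)=\mcal{M}(\sigma)$, so $\mcal{N}=\mcal{M}$. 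The whole content is therefore in the converses, where cancellation is only assumed against \emph{channels}, so from a failure of injectivity or surjectivity I must manufacture two genuinely distinct CPTP maps that $\mcal{R}$ cannot distinguish.

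For the first equivalence, suppose $\mcal{R}$ is not injective. Since $\mcal{R}$ is completely positive it is Hermiticity preserving, so its kernel is spanned by Hermitian operators, and since it is trace preserving any Hermitian $D$ with $\mcal{R}(D)=0$ satisfies $\Tr D=\Tr\mcal{R}(D)=0$; hence there is a nonzero \emph{traceless Hermitian} $D$ in the kernel. (This reduction to the traceless Hermitian subspace is where the auxiliary technical lemmas enter.) I then build the nonzero direction map $\Phi(\cdot):=\Tr[D\,\cdot]\,D$, which is Hermiticity preserving, trace annihilating, and has image inside $\mds{R}D\subseteq\ker\mcal{R}$, so that $\mcal{R}\circ\Phi=0$ identically. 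By Lemma~\ref{applem:1}, a small multiple $\epsilon\Phi$ is realized as an exact difference $\mcal{N}-\mcal{M}$ of two channels; concretely one may take $\mcal{N}=\Lambda+\tfrac{\epsilon}{2}\Phi$ and $\mcal{M}=\Lambda-\tfrac{\epsilon}{2}\Phi$ with $\Lambda$ the completely depolarizing channel, which are CPTP for small $\epsilon>0$ because the Choi operator of $\Lambda$ is strictly positive. These channels are distinct yet satisfy $\mcal{R}\circ\mcal{N}-\mcal{R}\circ\mcal{M}=\epsilon\,\mcal{R}\circ\Phi=0$, contradicting left cancellation.

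The second equivalence is the Hilbert--Schmidt dual of the first: $\mcal{R}$ is surjective if and only if its adjoint $\mcal{R}^\dag$ is injective. If $\mcal{R}$ is not surjective, then $\mcal{R}^\dag$ is not injective, and the same Hermiticity argument applied to $\mcal{R}^\dag$ yields a nonzero Hermitian $G$ with $\mcal{R}^\dag(G)=0$, i.e.\ $G$ orthogonal to the range of $\mcal{R}$. Choosing any nonzero traceless Hermitian $F\in\mf{H}(Y)$ I set $\Psi(\cdot):=\Tr[G\,\cdot]\,F$, which is trace annihilating and satisfies $\Psi\circ\mcal{R}(\cdot)=\Tr[\mcal{R}^\dag(G)\,\cdot]\,F=0$ identically. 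Realizing $\epsilon\Psi=\mcal{N}-\mcal{M}$ as a difference of two channels in $\mf{C}(Y)$ by the same symmetric construction produces distinct $\mcal{N},\mcal{M}$ with $\mcal{N}\circ\mcal{R}=\mcal{M}\circ\mcal{R}$, contradicting right cancellation.

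The main obstacle is precisely the restriction of the cancellation hypothesis to channels rather than to all linear maps: against arbitrary linear maps both converses would be immediate, but a difference of two channels is forced to be trace annihilating and Hermiticity preserving, so I must certify that the kernel or cokernel direction I have produced is attainable by such a constrained difference. Lemma~\ref{applem:1} is the tool that removes this obstacle, and the delicate point in applying it is the behavior on the identity: because the two channels in the realization share the common base point $\Lambda$, their difference equals $\epsilon\Phi$ (resp.\ $\epsilon\Psi$) on \emph{all} of $\mf{B}(X)$ and not merely on the traceless sector, so trace preservation is respected and the identity component never spoils the vanishing of $\mcal{R}\circ\Phi$ or $\Psi\circ\mcal{R}$. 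Verifying this compatibility, together with the reduction of injectivity and surjectivity of $\mcal{R}$ to the traceless Hermitian subspace, is the part that the auxiliary technical lemmas are designed to supply.
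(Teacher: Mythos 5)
Your proof is correct, but it takes a genuinely different route from the paper's. The paper argues in the forward direction: from cancellation against channels it deduces, via Lemma \ref{applem:1}, cancellation against \emph{all} linear maps on the traceless Hermitian sector, and then upgrades injectivity (or surjectivity) from $\mf{H}_0(X)$ to $\mf{H}(X)$ to $\mf{B}(X)$ through Lemmas \ref{applem:2}, \ref{applem:3} and \ref{applem:4}. You instead argue by contraposition and build explicit witnesses: a nonzero traceless Hermitian kernel (or cokernel) direction, promoted to a pair of distinct channels by perturbing the completely depolarizing channel symmetrically. This is more elementary and essentially self-contained --- your concrete construction $\Lambda\pm\tfrac{\epsilon}{2}\Phi$ does not actually need Lemma \ref{applem:1} at all, only strict positivity of the Choi operator of $\Lambda$ and the fact that $\Phi$ is Hermiticity-preserving and trace-annihilating; and it cleanly resolves the identity-component subtlety you flag, since a difference of channels produced by Lemma \ref{applem:1} is only guaranteed to equal the target map on $\mf{H}_0(X)$, whereas your common base point forces equality on all of $\mf{B}(X)$. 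What the paper's heavier lemma chain buys is reusability: the identical argument is rerun in the proof of Theorem \ref{thm:equiv} with random unitary operations in place of channels, using $\zpan{\mf{UO}(X)}=\mf{B}(\mf{H}_0(X))$; your perturbations of $\Lambda$ are generally not random unitary operations, so extending your construction to that setting would require the same span result.
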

See Appendix for omitted proofs. Unfortunately, although the \jami map of a general bipartite state is a CP maps, it need not be a quantum channel. To apply this Theorem to general CP maps that need not preserve trace, we derive the following result.
\begin{corollary} \label{appco:1}
    When $\mcal{R}\in\mf{L}(X,Y)$ is a CP map such that $\mcal{R}(\rho) \neq 0$ for any nonzero $\rho\geq0$, $\mcal{N}\circ\mcal{R}=\mcal{M}\circ\mcal{R}$ implies $\mcal{N}=\mcal{M}$ for every $\mcal{N},\mcal{M}\in \mf{C}(Y)$ if and only if $\mcal{R}$ is surjective.
\end{corollary}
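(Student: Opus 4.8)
The plan is to reduce the statement to Proposition \ref{thm:dariano} by normalizing the CP map $\mcal{R}$ into a genuine quantum channel through an invertible ``gauge'' transformation, making sure that this transformation changes neither surjectivity nor the left-cancellation property. The only substantive use of the hypothesis $\mcal{R}(\rho)\neq 0$ will be to guarantee that the normalizing transformation exists and is invertible.

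First I would translate the faithfulness hypothesis into an invertibility statement about the adjoint. Writing $P:=\mcal{R}^\dag(\mds{1}_Y)$, the adjoint relation yields $\Tr[\mcal{R}(\rho)]=\Tr[P\rho]$ for all $\rho$. Taking $\rho=|\psi\rangle\langle\psi|$ gives $\langle\psi|P|\psi\rangle=\Tr[\mcal{R}(|\psi\rangle\langle\psi|)]\geq 0$, so $P\geq 0$, and since $\mcal{R}(|\psi\rangle\langle\psi|)\geq 0$ this trace vanishes iff $\mcal{R}(|\psi\rangle\langle\psi|)=0$. Hence $\mcal{R}(\rho)\neq 0$ for every nonzero $\rho\geq 0$ is equivalent to $\langle\psi|P|\psi\rangle>0$ for all $\psi\neq 0$, i.e. to $P$ being positive definite and therefore invertible.

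Next I would define the completely positive map $\mcal{A}(\cdot):=P^{-1/2}(\cdot)P^{-1/2}$, which is a bijection of $\mf{B}(X)$ with inverse $P^{1/2}(\cdot)P^{1/2}$, and set $\tilde{\mcal{R}}:=\mcal{R}\circ\mcal{A}$. Then $\tilde{\mcal{R}}$ is CP as a composition of CP maps, and it is trace preserving because $\tilde{\mcal{R}}^\dag(\mds{1}_Y)=P^{-1/2}\mcal{R}^\dag(\mds{1}_Y)P^{-1/2}=P^{-1/2}PP^{-1/2}=\mds{1}_X$. Thus $\tilde{\mcal{R}}\in\mf{C}(X,Y)$ is a bona fide quantum channel, to which Proposition \ref{thm:dariano} applies directly in its surjectivity branch.

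Finally I would propagate the two relevant properties across the bijection $\mcal{A}$. Since $\mcal{A}$ is a bijection of $\mf{B}(X)$, the image of $\tilde{\mcal{R}}=\mcal{R}\circ\mcal{A}$ equals that of $\mcal{R}$, so $\mcal{R}$ is surjective iff $\tilde{\mcal{R}}$ is. Likewise, for channels $\mcal{N},\mcal{M}\in\mf{C}(Y)$ one has $\mcal{N}\circ\mcal{R}=\mcal{M}\circ\mcal{R}$ iff $\mcal{N}\circ\tilde{\mcal{R}}=\mcal{M}\circ\tilde{\mcal{R}}$, because $\mcal{A}$ may be cancelled from the right. Chaining these equivalences with Proposition \ref{thm:dariano} applied to $\tilde{\mcal{R}}$ gives the claim: $\mcal{R}$ has the left-cancellation property iff $\tilde{\mcal{R}}$ does iff $\tilde{\mcal{R}}$ is surjective iff $\mcal{R}$ is surjective. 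The main obstacle is conceptual rather than computational, namely recognizing that faithfulness is exactly invertibility of $\mcal{R}^\dag(\mds{1}_Y)$ and that it is \emph{pre}composition by the gauge map $\mcal{A}$ (not postcomposition) that leaves both surjectivity and left cancellation intact; once this is in place the remaining steps are routine.
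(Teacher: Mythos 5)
Your proposal is correct and follows essentially the same route as the paper: the paper likewise observes that $\mcal{R}^\dag(\mds{1})>0$, precomposes with $\Ad_{\mcal{R}^\dag(\mds{1})^{-1/2}}$ to obtain a genuine channel, and transfers surjectivity and left-cancellation across this invertible gauge map before invoking Proposition~\ref{thm:dariano}. Your write-up is in fact slightly more careful than the paper's (it spells out why $P$ is positive definite and states both directions of the equivalence explicitly).
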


The following basic result from linear algebra will be helpful when we switch between the perspectives of $A$ and $B$. \seepr

\begin{lemma} \label{applem:5}
    $\mcal{R}\in\mf{B}(X,Y)$ is surjective if and only if $\mcal{R}^\dag (\text{or } \mcal{R}^T) \in\mf{B}(Y,X)$ is injective.
\end{lemma}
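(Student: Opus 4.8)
The statement is a standard finite-dimensional linear algebra fact, and the plan is to establish it through the orthogonality relation between the range of a map and the null space of its adjoint. For any $\mcal{R}\in\mf{B}(X,Y)$ between finite-dimensional Hilbert spaces (equipped with the Hilbert--Schmidt inner product when $\mcal{R}$ is a superoperator), I would first prove the identity
\[
    \text{im}(\mcal{R})^\perp=\ker(\mcal{R}^\dag),
\]
which follows at once from the defining adjoint relation $\langle \mcal{R}x,y\rangle=\langle x,\mcal{R}^\dag y\rangle$: a vector $y$ is orthogonal to every element of the range of $\mcal{R}$ exactly when $\mcal{R}^\dag y=0$. Because $X$ and $Y$ are finite-dimensional, the range is closed and no completion argument is needed.

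From this identity the equivalence follows in one line: $\mcal{R}$ is surjective iff $\text{im}(\mcal{R})=Y$, iff $\text{im}(\mcal{R})^\perp=\{0\}$, iff $\ker(\mcal{R}^\dag)=\{0\}$, iff $\mcal{R}^\dag$ is injective. This disposes of the adjoint version of the claim.

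It then remains to replace the adjoint $\mcal{R}^\dag$ by the transpose $\mcal{R}^T$. I would route this through the adjoint rather than re-deriving an orthogonality identity for a non-sesquilinear transpose pairing: with respect to the fixed basis, $\mcal{R}^T$ is obtained from $\mcal{R}^\dag$ by entrywise complex conjugation, an operation that preserves matrix rank and hence the dimension of the kernel, so $\ker(\mcal{R}^T)=\{0\}$ iff $\ker(\mcal{R}^\dag)=\{0\}$. Equivalently, one can bypass the adjoint identity entirely and argue by rank--nullity, using $\text{rank}(\mcal{R})=\text{rank}(\mcal{R}^\dag)=\text{rank}(\mcal{R}^T)$ together with $\dim Y$, since surjectivity of $\mcal{R}$ and injectivity of the transpose (or adjoint) both reduce to the rank equaling $\dim Y$.

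There is no genuine obstacle here: both ingredients---the range/kernel orthogonality and the invariance of rank under conjugation---are elementary. The only point demanding a little care is keeping the transpose and adjoint conventions straight (recall the paper defines $\mcal{N}^T(\rho):=\mcal{N}^\dag(\rho^T)^T$ for superoperators), which is precisely why I would handle the transpose case by reduction to the adjoint via the rank argument instead of constructing a separate bilinear pairing.
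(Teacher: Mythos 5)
Your proof is correct and follows essentially the same route as the paper's: both rest on the identity $\text{im}(\mcal{R})^\perp=\ker(\mcal{R}^\dag)$ for the adjoint case, and both dispose of the transpose case by noting that $\mcal{R}^T$ and $\mcal{R}^\dag$ differ by a complex conjugation, which preserves injectivity. No substantive difference.
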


Consider a bipartite state $\rho_{AB}$. Here, without loss of generality, we assume that $A=\supp{\rho_A}$ and $B=\supp{\rho_B}$ so that both of $\rho_A$ and $\rho_B$ are full rank. It means that for any nonzero $\sigma_B\geq0$ in $\mf{B}(B)$, $\Tr[\sigma_B\rho_B]>0$. We say that $\rho_{AB}$ is \textit{faithful} (for quantum process tomography) on $A$, when the mapping $\mcal{E}\mapsto \mcal{E}_A(\rho_{AB})$ defined for any quantum channel $\mcal{E}\in\mf{C}(A)$ is injective. If this property holds for all $\mcal{E}\in\mf{Q}$ for some other subset $\mf{Q}\in\mf{L}(A)$ instead, then we will say that $\rho_{AB}$ is faithful to $\mf{Q}$. Faithfulness is crucial for quantum process tomography, since one can always recover the unknown quantum operation acted on a faithful quantum state by implementing quantum state tomography.

We claim that $\rho_{AB}$ being faithful on $A$ is equivalent to that the Jamio{\l}kowski map $\rho_{B\to A}\in \mf{L}(B,A)$ defined as
\begin{equation}
    \rho_{B\to A}(\sigma) = \Tr_B[(\mds{1}_A\otimes\sigma_B^T)\rho_{AB}],
\end{equation}
is surjective. It is because, first, $\mcal{E}\mapsto \mcal{E}_A(\rho_{AB})$ being injective is equivalent to that $\mcal{N}\circ\rho_{B\to A}=\mcal{M}\circ\rho_{B\to A}$ implies $\mcal{N}=\mcal{M}$ by the facts that the mapping $\rho_{AB} \mapsto \rho_{B\to A}$ is bijective and that $(\mcal{E}_A(\rho_{AB}))_{B\to A}=\mcal{E}\circ \rho_{B\to A}$. Next, by Corollary \ref{appco:1}, it is in turn equivalent to $\rho_{B\to A}$ being surjective since $\Tr[\rho_{B\to A}(\sigma)]=\Tr[\sigma^T\rho_B]>0$ for every nonzero $\sigma\geq0$.

Now, from the definition of $\rho_{A\to B}\in\mf{L}(A,B)$ as in (\ref{eqn:jami}), we can observe that $\rho_{A\to B}=\rho_{B\to A}^T$. Therefore, by Lemma \ref{applem:5}, it follows that $\rho_{AB}$ is faithful on $A$ if and only if $\rho_{A\to B}$ is injective, i.e. left invertible.

\begin{theorem} \label{thm:dariano2}
    A bipartite state $\rho_{AB}$ is faithful for quantum process tomography on $A$ if and only if its Jamio{\l}kowski map $\rho_{A\to B}$ is left invertible.
\end{theorem}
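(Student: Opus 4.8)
The plan is to prove the equivalence by routing faithfulness on $A$ through a cancellation property of the Jamio{\l}kowski map in the $B\to A$ direction, and then transporting the conclusion to the $A\to B$ direction by transposition. First I would make precise the two structural facts that are available: that the assignment $\rho_{AB}\mapsto\rho_{B\to A}$ is a linear bijection between bipartite operators on $AB$ and linear maps in $\mf{L}(B,A)$, and that it intertwines the two pictures via the composition identity $(\mcal{E}_A(\rho_{AB}))_{B\to A}=\mcal{E}\circ\rho_{B\to A}$. Granting these, faithfulness of $\rho_{AB}$ on $A$, i.e.\ injectivity of $\mcal{E}\mapsto\mcal{E}_A(\rho_{AB})$ over $\mf{C}(A)$, becomes literally the left-cancellation property ``$\mcal{N}\circ\rho_{B\to A}=\mcal{M}\circ\rho_{B\to A}\Rightarrow\mcal{N}=\mcal{M}$ for all $\mcal{N},\mcal{M}\in\mf{C}(A)$,'' since under the bijection $\mcal{N}_A(\rho_{AB})=\mcal{M}_A(\rho_{AB})$ transforms into $\mcal{N}\circ\rho_{B\to A}=\mcal{M}\circ\rho_{B\to A}$.

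The second step invokes Corollary \ref{appco:1} to convert this cancellation property into surjectivity of $\rho_{B\to A}$. To apply it I must check its hypotheses: $\rho_{B\to A}$ is completely positive (being the Jamio{\l}kowski map of a genuine state), and it is nondegenerate on positive inputs, i.e.\ $\rho_{B\to A}(\sigma)\neq0$ for every nonzero $\sigma\geq0$. The latter is exactly where the standing full-rank assumption $B=\supp{\rho_B}$ enters: since $\Tr[\rho_{B\to A}(\sigma)]=\Tr[\sigma^T\rho_B]>0$ for nonzero $\sigma\geq0$, the output cannot vanish. Corollary \ref{appco:1} then yields that the left-cancellation property holds if and only if $\rho_{B\to A}$ is surjective.

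The final step transports surjectivity of $\rho_{B\to A}$ into left invertibility of $\rho_{A\to B}$. Here I would establish $\rho_{A\to B}=\rho_{B\to A}^{T}$ directly from the two definitions together with the convention $\mcal{N}^T(\rho)=\mcal{N}^\dag(\rho^T)^T$, and then apply Lemma \ref{applem:5}, by which a map is surjective if and only if its transpose is injective. Composing the three equivalences gives $\rho_{AB}$ faithful on $A$ $\iff$ $\rho_{B\to A}$ surjective $\iff$ $\rho_{A\to B}=\rho_{B\to A}^{T}$ injective, i.e.\ left invertible, which is the claim.

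The genuinely delicate part is not any single equivalence but keeping the conventions consistent across them. The argument shuttles between three different ``transposes''---the operator transpose $\sigma^T$ inside the Jamio{\l}kowski definitions, the map transpose $\mcal{N}^T$, and the Hilbert--Schmidt adjoint implicit in Lemma \ref{applem:5}---so the main obstacle is verifying $\rho_{A\to B}=\rho_{B\to A}^{T}$ and the composition identity $(\mcal{E}_A(\rho_{AB}))_{B\to A}=\mcal{E}\circ\rho_{B\to A}$ with the correct index placements. Both reduce to short computations in a fixed product basis, but the conjugation and index bookkeeping is where an error would hide. Conceptually, the real work has already been offloaded: Corollary \ref{appco:1}, itself resting on Lemma \ref{applem:1} and Proposition \ref{thm:dariano}, is what repairs the original D'Ariano--Presti gap by making the cancellation-to-surjectivity step valid for trace-preserving channels rather than only for general quantum operations.
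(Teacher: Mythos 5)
Your proposal is correct and follows essentially the same route as the paper's own argument: faithfulness is recast as the left-cancellation property of $\rho_{B\to A}$ via the bijection $\rho_{AB}\mapsto\rho_{B\to A}$ and the identity $(\mcal{E}_A(\rho_{AB}))_{B\to A}=\mcal{E}\circ\rho_{B\to A}$, then Corollary~\ref{appco:1} (using the full-rank condition on $\rho_B$) yields surjectivity of $\rho_{B\to A}$, and finally $\rho_{A\to B}=\rho_{B\to A}^{T}$ together with Lemma~\ref{applem:5} converts this into left invertibility of $\rho_{A\to B}$. No substantive differences from the paper's proof.
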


Next, we consider faithfulness to a smaller class of quantum channels since it is often the case that an unknown channel can be sampled from a more restricted set of quantum channels.
In this work, we focus on the set of unital channels and its subclasses. Unital channels are important as they are characterized by the property of never decreasing the entropy of input state, so that they can model randomizing noises affecting quantum systems.

Unital channels are also called bistochastic maps because of its formal similarity with bistochastic matrices of classical probability theory. The set of random unitary operations is an important subclass of unital channels since they describe the noises originated from classical randomness so that  it can be considered the quantum counterpart of random permutation operations \cite{boes2018catalytic,lie2020uniform,lie2021correlational}. However, a remarkable result is that the quantum Birkhoff-von Neumann theorem is not true, i.e., although every bistochastic matrix is a convex sum of permutation matrices, not every unital channel is a random unitary operation, or a quantum  channel that can be expressed a convex sum of unitary operations. Despite the gap between these classes of quantum channels, we have the following surprising result.

\begin{theorem} \label{thm:equiv}
    Faithfulness to quantum channels, to unital quantum channels, and to random unitary operations are all equivalent.
\end{theorem}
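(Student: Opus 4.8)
The plan is to organise the three classes by the inclusions $\mf{RU}(A)\subseteq\mf{UC}(A)\subseteq\mf{C}(A)$ and to exploit the linearisation of faithfulness already used for Theorem \ref{thm:dariano2}. Since a restriction of an injective map stays injective, faithfulness to $\mf{C}(A)$ implies faithfulness to $\mf{UC}(A)$, which implies faithfulness to $\mf{RU}(A)$; so the whole content is the reverse chain and it suffices to prove the single implication that faithfulness to random unitary operations implies faithfulness to all quantum channels. I would recast this in terms of differences: using $(\mcal{E}_A(\rho_{AB}))_{B\to A}=\mcal{E}\circ\rho_{B\to A}$ and the bijectivity of $\rho_{AB}\mapsto\rho_{B\to A}$, faithfulness to a class $\mf{Q}$ is equivalent to the statement that no nonzero map of the form $\mcal{E}-\mcal{M}$ with $\mcal{E},\mcal{M}\in\mf{Q}$ annihilates $V:=\mathrm{range}(\rho_{B\to A})$. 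Hence I must compare the real spans of $\{\mcal{E}-\mcal{M}\}$ over the three classes.

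Every such difference is Hermitian-preserving and trace-annihilating; the extra feature of the two unital classes is that the difference also annihilates $\mds{1}_A$, whereas a difference of two general channels need not. The key computation is that, on the space $\mf{H}_0(A)$ of traceless Hermitian operators, these difference-spans all coincide. For $\mf{C}(A)$ this is exactly Lemma \ref{applem:1}. For $\mf{RU}(A)$ I would use representation theory: $U\mapsto\Ad_U$ restricted to $\mf{H}_0(A)$ is the adjoint representation of the special unitary group, which is irreducible and of real type, so by the double commutant theorem the real span of $\{\Ad_U\}$ already fills the full endomorphism algebra of the traceless sector. The completely depolarising channel — a random unitary operation obtained by twirling over a unitary $1$-design — then lets me realise the identity on $\mf{H}_0(A)$ by a coefficient-balanced (sum-zero) combination, so differences of random unitary operations reproduce \emph{every} traceless-to-traceless map.

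Combining these facts, faithfulness to $\mf{RU}(A)$ becomes equivalent to $V+\mathbb{R}\mds{1}_A$ spanning the whole operator space, while faithfulness to $\mf{C}(A)$ is equivalent to $V$ itself spanning the whole space by Theorem \ref{thm:dariano2} and Corollary \ref{appco:1}. The two conditions agree in every case except one: they can differ only when $V$ is a hyperplane whose span with $\mds{1}_A$ is everything, i.e. a codimension-one subspace not containing the identity. Ruling out this single configuration is the step I expect to be the main obstacle, since it is the only place where the power of the two channel classes can come apart.

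To close this gap I would abandon pure linear algebra and bring in the positivity and full-rank structure of $\rho_{AB}$: the subspace $V$ is $\dagger$-closed, spanned by positive operators (images of positive operators under the completely positive map $\rho_{B\to A}$), and contains the full-rank operator $\rho_A=\rho_{B\to A}(\mds{1}_B)$. The plan is to show that a $\dagger$-closed subspace with a full-rank positive interior point, spanned by positive operators, cannot be a hyperplane avoiding $\mds{1}_A$, thereby forcing $\mds{1}_A\in V$ in the codimension-one case and collapsing the two conditions into one. I anticipate that controlling precisely how the cone of positive operators can sit inside a codimension-one Hermitian subspace with a full-rank interior point — in effect showing that such a hyperplane must be $\{X:\Tr[HX]=0\}$ for an $H$ whose spectrum is incompatible with the positivity span unless $\mds{1}_A$ is already present — is the genuinely delicate part, and the crux on which the equivalence rests.
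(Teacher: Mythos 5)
Your reduction is, up to its last step, actually more careful than the paper's own argument. The paper's proof simply asserts that the argument of Proposition~\ref{thm:dariano} ``carries over'' to random unitary operations once one knows $\zpan{\mf{UO}(A)}=\mf{B}(\mf{H}_0(A))$, and concludes that faithfulness to $\mf{RU}(A)$ is again equivalent to surjectivity of $\rho_{B\to A}$. You correctly notice what that assertion glosses over: a difference of two unital (in particular random unitary) channels annihilates $\mds{1}_A$, so its vanishing on $V=\mathrm{range}(\rho_{B\to A})$ only constrains it on the traceless projection of $V$. Hence faithfulness to $\mf{RU}(A)$ or $\mf{UC}(A)$ is equivalent to $V+\mathbb{C}\mds{1}_A=\mf{B}(A)$, whereas faithfulness to $\mf{C}(A)$ is equivalent to $V=\mf{B}(A)$ by Theorem~\ref{thm:dariano2}, and the two conditions can differ only when $V$ is a hyperplane avoiding $\mds{1}_A$. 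All of this is right, and it locates the one point on which everything hinges.

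The genuine gap is that this configuration cannot be ruled out, so the positivity argument you plan for the final step cannot succeed: a $\dagger$-closed hyperplane $\{X:\Tr[HX]=0\}$ with $H$ Hermitian, indefinite and $\Tr[H]\neq0$ passes through the interior of the positive cone, is therefore spanned by full-rank states, and it does arise as the range of a Jamio{\l}kowski map. Concretely, take $A=\mathbb{C}^2$, $B=\mathbb{C}^3$, $\tau_1=\mathrm{diag}(1/3,2/3)$, $\tau_2=\tau_1+r\sigma_x$, $\tau_3=\tau_1+r\sigma_y$ for small $r>0$, and $\rho_{AB}=\tfrac13\sum_{k}\tau_k\otimes\dyad{k}_B$; then $\mathrm{range}(\rho_{B\to A})=\mathrm{span}_{\mathbb{C}}\{\tau_1,\tau_2,\tau_3\}$ is exactly the hyperplane orthogonal to $H=\mathrm{diag}(2,-1)$, which misses $\mds{1}_A$. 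Any two unital channels agreeing on $\tau_1,\tau_2,\tau_3$ agree on $\sigma_x,\sigma_y$ and, being unital, on $\mds{1}_A$ and hence on $\sigma_z$, so they coincide: $\rho_{AB}$ is faithful to $\mf{UC}(A)\supseteq\mf{RU}(A)$. Yet the completely depolarizing channel and the measure-and-prepare channel $X\mapsto\bra{0}X\ket{0}\,\xi_1+\bra{1}X\ket{1}\,\xi_2$ with $\tfrac13\xi_1+\tfrac23\xi_2=\mds{1}_A/2$ and $\xi_1\neq\xi_2$ are distinct channels acting identically on $\rho_{AB}$, so $\rho_{AB}$ is not faithful to $\mf{C}(A)$. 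In other words, the step you flag as ``genuinely delicate'' is not a missing lemma but the precise point at which the claimed equivalence, and the paper's proof of it, breaks down; what your analysis does establish is the equivalence of faithfulness to $\mf{UC}(A)$ and to $\mf{RU}(A)$, both being equivalent to the weaker condition $V+\mathbb{C}\mds{1}_A=\mf{B}(A)$.
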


\seepr However, this reduction of faithfulness has a certain limitation as the following result suggests. Namely, faithfulness to quantum channels and to unitary operations are not equivalent. \seeap 

\begin{prop} \label{prop:unitinv}

There is a bipartite quantum state $\rho_{AB}$ that is faithful to unitary operations on $A$ but whose \jami map $\rho_{A\to B}$ is not left invertible. 
\end{prop}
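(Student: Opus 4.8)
The plan is to settle this existence statement with a single explicit two-qubit counterexample ($|A|=|B|=2$), exploiting a gap between two different notions of non-degeneracy of the local operator content. Expanding $\rho_{AB}=\frac14\sum_{\mu\nu}T_{\mu\nu}\,\sigma_\mu\otimes\sigma_\nu$ in the Pauli basis $\sigma_0=\mds{1},\sigma_1=X,\sigma_2=Y,\sigma_3=Z$, a short computation shows that $\rho_{A\to B}$ is injective (left invertible) exactly when the real correlation matrix $T$ has full rank $4$, i.e. when the $A$-side operators occurring in the expansion linearly span $\mf{B}(A)$. Faithfulness to unitaries, on the other hand, reduces to a statement about the \emph{commutant}: matching the two sides of $U_A\rho_{AB}U_A^\dag=V_A\rho_{AB}V_A^\dag$ forces $W:=V^\dag U$ to commute with every $A$-side operator appearing in the expansion, so $\rho_{AB}$ is faithful to unitaries iff those operators generate $\mf{B}(A)$ as an algebra. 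The decisive point is that a family of operators can generate the full matrix algebra without linearly spanning it, so the two requirements genuinely separate.

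Concretely I would take
\begin{equation}
\rho_{AB}=\frac14\!\left(\mds{1}\otimes\mds{1}+\tfrac12\,X\otimes X+\tfrac12\,Z\otimes Z\right).
\end{equation}
This is a Bell-diagonal state with eigenvalues $\{\tfrac12,\tfrac14,\tfrac14,0\}$, hence a legitimate density operator, and its marginals are $\rho_A=\rho_B=\mds{1}/2$, so both are full rank and the standing assumption $A=\supp{\rho_A}$, $B=\supp{\rho_B}$ is met.

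It then remains to check the two properties. For non-invertibility, the only Pauli absent from the $A$-slot is $Y$, and a direct evaluation gives $\rho_{A\to B}(Y)=\Tr_A[(Y^T\otimes\mds{1})\rho_{AB}]=0$, since $\Tr[Y]=\Tr[YX]=\Tr[YZ]=0$; thus $Y\in\ker\rho_{A\to B}$ and the map is not left invertible (equivalently $\mathrm{rank}\,T=3$). For faithfulness to unitaries, comparing the $B$-side Pauli coefficients of $U_A\rho_{AB}U_A^\dag=V_A\rho_{AB}V_A^\dag$ yields $WXW^\dag=X$ and $WZW^\dag=Z$ with $W=V^\dag U$; as no operator other than a scalar multiple of $\mds{1}$ commutes with both $X$ and $Z$, this forces $W=e^{i\theta}\mds{1}$, so the unitary operations induced by $U$ and $V$ coincide. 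Hence $\rho_{AB}$ is faithful to unitary operations on $A$ while $\rho_{A\to B}$ fails to be left invertible, which by Theorem~\ref{thm:dariano2} also certifies that it is not faithful to all quantum channels.

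The main difficulty is conceptual rather than computational: one has to recognize that faithfulness to unitaries is governed by the \emph{multiplicative} closure (the commutant) of the local operator content, whereas left-invertibility of $\rho_{A\to B}$ is governed by its \emph{linear} span, and then design a state that pries the two apart. The only real constraint is positivity---the chosen operator content must still assemble into a genuine $\rho_{AB}\geq0$---which is precisely why a Bell-diagonal ansatz, where the spectrum is transparent, is the convenient vehicle. The construction clearly generalizes: on any system with $|A|\geq2$ one can load into the $A$-slot a proper subset of operators that generates $\mf{B}(A)$ as an algebra yet fails to span it linearly, producing an entire family of states faithful to unitaries but not left invertible.
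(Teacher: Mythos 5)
Your proof is correct, and it reaches the statement by a genuinely different counterexample than the paper's. The paper works in $\mds{C}^d\otimes\mds{C}^2$ with the Q--C state $\rho_{AB}=\tfrac12(\sum_i\lambda_i\dyad{i}_A)\otimes\dyad{0}_B+\tfrac12\dyad{+}_A\otimes\dyad{1}_B$: the image of $\rho_{B\to A}$ is the two-dimensional span of $\{\sum_i\lambda_i\dyad{i},\dyad{+}\}$, so surjectivity fails, while faithfulness to unitaries holds because a unitary commuting with a non-degenerate diagonal state and fixing $\dyad{+}$ must be a phase. Your Bell-diagonal two-qubit state $\tfrac14(\mds{1}\otimes\mds{1}+\tfrac12 X\otimes X+\tfrac12 Z\otimes Z)$ implements the same underlying mechanism --- the $A$-side operator content generates $\mf{B}(A)$ as an algebra (so the commutant is trivial and unitaries are pinned down) without spanning it linearly (so the \jami map has a kernel, here containing $Y$) --- and all your computations check out: the spectrum $\{\tfrac12,\tfrac14,\tfrac14,0\}$ confirms positivity, the marginals are maximally mixed, $\rho_{A\to B}(Y)=0$ follows from the vanishing traces, and comparing $B$-side Pauli coefficients correctly forces $WXW^\dag=X$, $WZW^\dag=Z$, hence $W\propto\mds{1}$. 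What your version buys is a minimal-dimensional, maximally explicit example and a crisp articulation of the conceptual dichotomy (multiplicative closure versus linear span) that the paper leaves implicit; what the paper's version buys is reuse, since the same state is recycled in the proof of Proposition \ref{prop:nonequiv} to separate sensitivity from faithfulness for unital channels, a property your state would need to be checked for separately (and in fact fails, since the dephasing map killing the $Y\otimes Y$ component already fixes your state, whereas it is not needed here).
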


\textit{Tomographic sensitivity}---\, In this section, we will consider a property of bipartite quantum state that generalizes the faithfulness. Oftentimes, the full tomography of unknown channel is not necessary when one is only interested in \textit{sensing} the perturbation caused by the quantum channel. In that case, a quantum state's ability of detecting presence of nontrivial effect is more important than that of faithfully remembering every detail of the channel action.

We will say that a bipartite state $\rho_{AB}$ is (tomogrphically) \textit{sensitive} on $A$ when, for every $\mcal{E}\in\mf{C}(A)$, $\mcal{E}_A(\rho_{AB})=\rho_{AB}$ implies $\mcal{E}=\id_A$. If this property holds for all $\mcal{E}\in\mf{Q}$ for some other subset $\mf{Q}\subseteq\mf{L}(A)$ instead, then we will say that $\rho_{AB}$ is sensitive to $\mf{Q}$. A quantum state is faithful when it is useful for fully recovering an unknown quantum process, whereas a quantum state is sensitive for merely detecting if a change has happened to the state. From definition, we immediately get the following relation between faithfulness and sensitivity.

\begin{prop}
    Faithfulness to $\mf{Q}$ and sensitivity to $\mf{Q}$ are equivalent when $\mf{Q}$ forms a group.
\end{prop}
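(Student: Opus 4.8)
The plan is to prove the two implications separately, interpreting ``$\mf{Q}$ forms a group'' as closure of $\mf{Q}$ under composition together with the existence of inverses in $\mf{Q}$, so that in particular $\id_A\in\mf{Q}$. The only structural fact I will need repeatedly is the functoriality of restriction to subsystem $A$, namely $(\mcal{G}\circ\mcal{H})_A=\mcal{G}_A\circ\mcal{H}_A$, which follows immediately from $(\mcal{G}\circ\mcal{H})\otimes\id_B=(\mcal{G}\otimes\id_B)\circ(\mcal{H}\otimes\id_B)$. With this in hand the statement reduces to a short group-theoretic manipulation.

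First I would show that faithfulness to $\mf{Q}$ implies sensitivity to $\mf{Q}$. Suppose $\mcal{E}_A(\rho_{AB})=\rho_{AB}$ for some $\mcal{E}\in\mf{Q}$. Since $\id_A\in\mf{Q}$ and $(\id_A)_A(\rho_{AB})=\rho_{AB}$, the hypothesis can be rewritten as $\mcal{E}_A(\rho_{AB})=(\id_A)_A(\rho_{AB})$. Injectivity of the map $\mcal{N}\mapsto\mcal{N}_A(\rho_{AB})$ on $\mf{Q}$ then forces $\mcal{E}=\id_A$, which is exactly the definition of sensitivity. Note that this direction uses only $\id_A\in\mf{Q}$, not the full group structure.

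For the converse, suppose $\rho_{AB}$ is sensitive to $\mf{Q}$ and take $\mcal{E},\mcal{F}\in\mf{Q}$ with $\mcal{E}_A(\rho_{AB})=\mcal{F}_A(\rho_{AB})$; the goal is $\mcal{E}=\mcal{F}$. Because $\mf{Q}$ is a group, both $\mcal{F}^{-1}$ and $\mcal{F}^{-1}\circ\mcal{E}$ lie in $\mf{Q}$. Applying $(\mcal{F}^{-1})_A$ to both sides of the equality and using functoriality gives $(\mcal{F}^{-1}\circ\mcal{E})_A(\rho_{AB})=(\mcal{F}^{-1}\circ\mcal{F})_A(\rho_{AB})=(\id_A)_A(\rho_{AB})=\rho_{AB}$. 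Sensitivity, applied to the element $\mcal{F}^{-1}\circ\mcal{E}\in\mf{Q}$, then yields $\mcal{F}^{-1}\circ\mcal{E}=\id_A$, and composing with $\mcal{F}$ gives $\mcal{E}=\mcal{F}$. This establishes injectivity, i.e. faithfulness.

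I do not expect any serious obstacle here, since the argument is essentially the standard reduction of injectivity to triviality of the kernel via group translation. The only points that genuinely require care are structural rather than computational: verifying that restriction to $A$ commutes with composition, so that $(\mcal{F}^{-1})_A$ really does undo $\mcal{F}_A$ when acting on the bipartite state, and confirming that the group-theoretic inverse $\mcal{F}^{-1}$ is the same linear map whose restriction we apply to $\rho_{AB}$. Once these are noted, both implications follow at once.
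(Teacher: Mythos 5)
Your proof is correct and follows essentially the same route as the paper's: the forward direction uses only $\id_A\in\mf{Q}$, and the converse is the standard group-translation argument reducing injectivity to triviality at the identity via $\mcal{F}^{-1}\circ\mcal{E}$. Your explicit remark that the forward implication already needs $\id_A\in\mf{Q}$ is a slight sharpening of the paper's terser ``by definition'' for that direction, but the substance is identical.
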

\begin{proof}
    By definition, every faithful state is sensitive for any $\mf{Q}$. Conversely, assume that $\mf{Q}$ is a group. It follows that for every $\mcal{F}\in\mf{Q}$, its inverse $\mcal{F}^{-1}$ exists in $\mf{Q}$. Therefore, the condition that $\mcal{E}_A(\rho_{AB})=\mcal{F}_A(\rho_{AB})$ implies $\mcal{E}=\mcal{F}$ for arbitrary $\mcal{E},\mcal{F}\in\mf{Q}$ is equivalent to that $(\mcal{F}^{-1}\mcal{E})_A(\rho_{AB})=\rho_{AB}$ implies $\mcal{F}^{-1}\mcal{E}=\id_A$. Since $\mcal{F}^{-1}\mcal{E}$ is still arbitrary, the desired result follows.
\end{proof}

We remark that when the input and output spaces dimensions are the smae, then the only possible $\mf{Q}$ that forms a group is the set of unitary operations or its subgroup. Thus, faithfulness and sensitivity to unitary operations are equivalent. However, there are cases where faithfulness and sensitivity are distinct, as given in Appendix.

\begin{prop} \label{prop:nonequiv}
    Faithfulness and sensitivity to unital channels are not equivalent. The same holds for random unitary operations.
\end{prop}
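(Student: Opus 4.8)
The plan is to exploit the fact, established in the proof of the preceding proposition, that faithfulness always implies sensitivity; hence to separate the two notions for unital channels it suffices to produce a single bipartite state that is sensitive to $\mf{UC}(A)$ but not faithful. I would phrase both properties on the level of the Jamio{\l}kowski map. Writing $V:=\text{range}(\rho_{B\to A})\subseteq\mf{B}(A)$, non-faithfulness means $V\subsetneq\mf{B}(A)$, since by Theorem~\ref{thm:equiv} faithfulness to unital channels coincides with faithfulness to all channels, which by the discussion preceding Theorem~\ref{thm:dariano2} is surjectivity of $\rho_{B\to A}$. For sensitivity I use the identity $(\mcal{E}_A(\rho_{AB}))_{B\to A}=\mcal{E}\circ\rho_{B\to A}$ together with the bijectivity of $\rho_{AB}\mapsto\rho_{B\to A}$: the equation $\mcal{E}_A(\rho_{AB})=\rho_{AB}$ is equivalent to $\mcal{E}\circ\rho_{B\to A}=\rho_{B\to A}$, i.e. to $\mcal{E}$ acting as the identity on every element of $V$. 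So the state is sensitive to unital channels precisely when the only $\mcal{E}\in\mf{UC}(A)$ fixing $V$ pointwise is $\id_A$.

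The key structural input is the standard fact that the fixed-point set $\{X:\mcal{E}(X)=X\}$ of a unital channel is a unital $*$-subalgebra of $\mf{B}(A)$ (the maximally mixed state is a faithful invariant state, so the fixed-point projection is a conditional expectation and in particular multiplicative). Consequently, if $V$ generates $\mf{B}(A)$ as a $*$-algebra, then any $\mcal{E}\in\mf{UC}(A)$ fixing $V$ must fix the whole algebra it generates, forcing $\mcal{E}=\id_A$. This is exactly the gap I want to exploit: $V$ can generate $\mf{B}(A)$ while being a proper subspace, so the state is sensitive without being faithful.

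Concretely I would take qubits $A,B$ and
\begin{equation}
    \rho_{AB}=\tfrac14\left(\mds{1}\otimes\mds{1}+\tfrac12\,X\otimes X+\tfrac12\,Y\otimes Y\right),
\end{equation}
where $X,Y,Z$ are the Pauli operators. One checks $\rho_{AB}\geq0$ (its nonzero eigenvalues are $\tfrac12,\tfrac14,\tfrac14$) and $\rho_A=\rho_B=\mds{1}/2$ is full rank, so the standing support assumption holds. Since the $B$-side operators $\tfrac14\mds{1},\tfrac18 X,\tfrac18 Y$ attached to $\mds{1},X,Y$ on $A$ are linearly independent, the range of $\rho_{B\to A}$ is exactly $V=\text{span}\{\mds{1},X,Y\}$. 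This omits $Z$, so $V\subsetneq\mf{B}(A)$ and the state is not faithful; but $\tfrac1{2i}[X,Y]=Z$, so $V$ generates all of $\mf{B}(A)$, and by the previous paragraph the state is sensitive to $\mf{UC}(A)$. Finally, because $\mf{RU}(A)\subseteq\mf{UC}(A)$, sensitivity to unital channels immediately gives sensitivity to random unitary operations (sensitivity to a smaller class is a weaker condition), while non-faithfulness is unchanged (faithfulness to $\mf{RU}(A)$ again coincides with faithfulness to all channels by Theorem~\ref{thm:equiv}); so the same state settles the second assertion.

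The main obstacle is conceptual rather than computational: recognizing that the correct criterion separating sensitivity from faithfulness is algebra-generation of $V$ versus $V$ spanning all of $\mf{B}(A)$, and securing the fixed-point-algebra theorem for unital channels, which is the one nontrivial external ingredient. Once that is in place the construction is routine; the only points needing care are verifying positivity of the chosen $\rho_{AB}$ and checking that the $B$-parts are linearly independent so that $V$ equals the claimed span rather than a smaller subspace.
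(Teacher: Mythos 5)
Your proof is correct, but it takes a genuinely different route from the paper's. The paper reuses the state $\rho_{AB}=\tfrac12(\sum_i\lambda_i\dyad{i}_A)\otimes\dyad{0}_B+\tfrac12\dyad{+}_A\otimes\dyad{1}_B$ from Proposition~\ref{prop:unitinv} (so non-faithfulness comes for free) and proves sensitivity by a concrete chain: a unital channel fixing the nondegenerate state $\sum_i\lambda_i\dyad{i}_A$ must preserve each eigenprojector, hence is a Schur-product channel with correlation matrix $C$, and fixing $\dyad{+}_A$ then forces $C_{ij}=1$, i.e.\ the channel is $\id_A$. You instead isolate the general criterion: sensitivity to $\mf{UC}(A)$ holds iff the only unital channel fixing the range $V$ of $\rho_{B\to A}$ pointwise is $\id_A$, and since the fixed-point set of a unital channel is a unital $*$-subalgebra (the same commutant structure the paper invokes via its citation of Arias et al.\ in the proof of Lemma~\ref{prop:sensun}), it suffices that $V$ generate $\mf{B}(A)$ as a $*$-algebra while being a proper subspace; your Bell-diagonal state with $V=\text{span}\{\mds{1},X,Y\}$ does exactly this, and your positivity and marginal checks are right. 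Your approach buys a reusable, structural sufficient condition that in fact also explains the paper's example (the nondegenerate diagonal state together with the all-ones matrix $\dyad{+}_A$ generate $\mf{B}(A)$ as an algebra without spanning it linearly), whereas the paper's argument is more elementary and self-contained at the level of Schur channels and lets a single state witness both Propositions~\ref{prop:unitinv} and~\ref{prop:nonequiv}. The only point to make explicit in your write-up is that the fixed-point-algebra theorem requires the map to be both unital and trace-preserving so that $\mds{1}/|A|$ is a faithful invariant state; this holds here by the definition of $\mf{UC}(A)$.
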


Recall that $\rho_{AB}$ is called a C-Q(Classical-quantum) state if there is a basis $\{\ket{i}_A\}$ of $A$ such that $\rho_{AB}$ has the form
    \begin{equation} \label{eqn:CQ}
        \rho_{AB}=\sum_i p_i \dyad{i}_A\otimes \rho_B^i,
    \end{equation}
for some probability distribution $\{p_i\}$ and quantum states $\rho_B^i\in\mf{S}(B)$. As a generalization, we will say that $\rho_{AB}$ is PC-Q (partially classical-quantum) if there exists a projective measurement $\{\Pi_i\}_{i=1}^n$ with $n>1$ on $A$ ($\Pi_i\Pi_j=\delta_{ij}\Pi_i$ and $\sum_i \Pi_i =\mds{1}_A$) that leaves $\rho_{AB}$ unperturbed. In other words,
    \begin{equation} \label{eqn:PCQ}
        \rho_{AB}=\sum_i(\Pi_i\otimes \mds{1}_B)\rho_{AB}(\Pi_i\otimes \mds{1}_B).
    \end{equation}
 If the roles of $A$ and $B$ are reversed, then we will call it a Q-PC state. A state $\rho_{AB}$ being not PC-Q means that there is no classical observable that can be observed without perturbing the state does not exist \textit{at all}. Hence, it signifies the existence of a certain type of quantum correlation between $A$ and $B$. Hereby we provide a unifying characterization of quantum states that are sensitive to unital channels and unitary operations.
 
\begin{theorem} \label{thm:PCQcond} The following are equivalent. $(i)$ $\rho_{AB}$ is sensitive to unitary operations on $A$. $(ii)$ $\rho_{AB}$ is sensitive to unital channels on $A$. $(iii)$ $\rho_{AB}$ is not PC-Q. 
\end{theorem}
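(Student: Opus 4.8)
The plan is to prove the cycle $(ii)\Rightarrow(i)\Rightarrow(iii)\Rightarrow(ii)$, which establishes the three-way equivalence. The implication $(ii)\Rightarrow(i)$ is immediate: every unitary operation $\mathrm{Ad}_U$ is a unital channel, so $\mf{UO}(A)\subseteq\mf{UC}(A)$ and sensitivity to the larger class forces sensitivity to the subclass. The two remaining implications I would handle by contraposition.

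For $(i)\Rightarrow(iii)$ I would show that a PC-Q state is always fixed by a nontrivial unitary operation, so that it cannot be sensitive to unitaries. Suppose $\{\Pi_i\}_{i=1}^n$ with $n>1$ realizes \eqref{eqn:PCQ}. Sandwiching that identity between $\Pi_k\otimes\mds{1}_B$ on the left and $\Pi_l\otimes\mds{1}_B$ on the right and using $\Pi_k\Pi_i=\delta_{ki}\Pi_i$ shows that the off-diagonal blocks vanish, i.e. $(\Pi_k\otimes\mds{1}_B)\rho_{AB}(\Pi_l\otimes\mds{1}_B)=0$ for $k\neq l$. Setting $U=\sum_i e^{i\theta_i}\Pi_i$ with $\theta_1\neq\theta_2$ gives a unitary that is not a scalar multiple of $\mds{1}_A$, hence $\mathrm{Ad}_U\neq\id_A$, and a direct expansion yields $(\mathrm{Ad}_U\otimes\id_B)(\rho_{AB})=\sum_{k,l}e^{i(\theta_k-\theta_l)}(\Pi_k\otimes\mds{1}_B)\rho_{AB}(\Pi_l\otimes\mds{1}_B)=\sum_k(\Pi_k\otimes\mds{1}_B)\rho_{AB}(\Pi_k\otimes\mds{1}_B)=\rho_{AB}$, so $\rho_{AB}$ is not sensitive to $\mf{UO}(A)$.

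The substantive direction is $(iii)\Rightarrow(ii)$, again by contraposition: assuming $\rho_{AB}$ is fixed by some nontrivial unital channel, I would manufacture a nontrivial projective measurement on $A$ that leaves it invariant. Take $\mcal{E}\in\mf{UC}(A)$ with $\mcal{E}\neq\id_A$ and $(\mcal{E}\otimes\id_B)(\rho_{AB})=\rho_{AB}$. Expanding $\rho_{AB}=\sum_\mu X_\mu\otimes E_\mu$ in a Hermitian basis $\{E_\mu\}$ of $\mf{B}(B)$ and invoking linear independence shows each $X_\mu\in\mf{B}(A)$ is fixed by $\mcal{E}$, so $\rho_{AB}\in\mcal{F}\otimes\mf{B}(B)$ where $\mcal{F}:=\mathrm{Fix}(\mcal{E})$. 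The crucial structural input, which I expect to be the main obstacle, is that $\mcal{F}$ is a unital $\dagger$-subalgebra of $\mf{B}(A)$. This rests on the fact that a unital channel has the faithful invariant state $\mds{1}_A/|A|$: the Kadison--Schwarz inequality gives $\mcal{E}(X^\dagger X)\geq X^\dagger X$ for any fixed point $X$, while trace preservation forces this positive operator to have vanishing trace and hence be zero; saturation of Kadison--Schwarz then places $X$ in the multiplicative domain, which makes $\mcal{F}$ closed under products (it is automatically $\dagger$-closed by Hermiticity preservation and contains $\mds{1}_A$ by unitality). I would isolate this as a separate lemma in the Appendix.

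Granting that $\mcal{F}$ is a unital $\dagger$-algebra, properness $\mcal{F}\subsetneq\mf{B}(A)$ holds because $\mcal{E}\neq\id_A$. The finite-dimensional double commutant theorem then gives $\mcal{F}''=\mcal{F}\subsetneq\mf{B}(A)$, so its commutant $\mcal{F}'$ strictly contains $\mathbb{C}\mds{1}_A$; being $\dagger$-closed, $\mcal{F}'$ contains a non-scalar Hermitian $W$, whose spectral projectors $\{\Pi_i\}$ form a projective measurement with $n>1$ and satisfy $\Pi_i\in\mcal{F}'$. Since $\rho_{AB}\in\mcal{F}\otimes\mf{B}(B)$, each $\Pi_i\otimes\mds{1}_B$ commutes with $\rho_{AB}$, so that $\sum_i(\Pi_i\otimes\mds{1}_B)\rho_{AB}(\Pi_i\otimes\mds{1}_B)=\rho_{AB}\sum_i(\Pi_i\otimes\mds{1}_B)=\rho_{AB}$, which exhibits $\rho_{AB}$ as PC-Q and closes the cycle.
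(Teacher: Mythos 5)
Your proof is correct, but it reaches the conclusion by a genuinely different route from the paper's. The paper factors everything through a commutation criterion (its Lemma~\ref{prop:sensun}): $\rho_{AB}$ is sensitive to unital channels (resp.\ unitaries) on $A$ iff the only $M\in\mf{B}(A)$ (resp.\ $\mf{U}(A)$) with $[M\otimes\mds{1}_B,\rho_{AB}]=0$ are scalars. The forward direction there leans on the Arias--Gheondea--Gudder fixed-point theorem (a state fixed by a unital channel commutes with every Kraus operator), the converse on an explicit two-Kraus unital channel $\Ad_{M_k}+\Ad_{(\mds{1}-M_k^2)^{1/2}}$ built from a commuting operator; the theorem is then finished by showing that ``not PC-Q'' is equivalent to this commutant condition for unitaries and for general operators alike. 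You instead close the cycle $(ii)\Rightarrow(i)\Rightarrow(iii)\Rightarrow(ii)$ and, for the substantive step $(iii)\Rightarrow(ii)$, work with the fixed-point set of the channel itself: Kadison--Schwarz plus trace preservation puts fixed points in the multiplicative domain, so $\mathrm{Fix}(\mcal{E})$ is a proper unital $*$-subalgebra; the double commutant theorem then forces its commutant to contain a non-scalar Hermitian element whose spectral projections realize the PC-Q decomposition. (Your $(i)\Rightarrow(iii)$ computation with $U=\sum_i e^{i\theta_i}\Pi_i$ is essentially the paper's.) What each buys: the paper's commutation lemma is an independently quotable characterization of sensitivity and keeps the two classes (unitary/unital) on a visibly common footing, at the cost of importing the Kraus-commutant fixed-point theorem; your argument is self-contained modulo Choi's multiplicative domain theorem and makes transparent \emph{where} the unperturbing projective measurement comes from, namely the commutant of the fixed-point algebra. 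One small point to tighten when writing it up: for membership in the (two-sided) multiplicative domain you need the saturation of Kadison--Schwarz for both $X^\dagger X$ and $XX^\dagger$; since $\mathrm{Fix}(\mcal{E})$ is $\dagger$-closed the same trace argument applied to $X^\dagger$ supplies the second identity, so this is a one-line addition rather than a gap.
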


\seepr Hence, for any set $\mf{W}$ of quantum channels on $A$ between unital channels and unitary operations $(\mf{UO}(A)\subseteq\mf{W}\subseteq\mf{UC}(A))$, sensitivity to $\mf{W}$ is equivalent to not being a PC-Q state. Important examples of such $\mf{W}$ include the set of random unitary operations, $\mf{RU}(A)$, the maps with the asymptotic quantum Birkhoff property (AQBP), the Schur maps, the catalytic maps and factorizable maps and its strong variant \cite{lie2021correlational,haagerup2011factorization,musat2020non}. From these observations, we can see that a bipartite quantum state is sensitive to unital channels \textit{on both sides} if and only if there are no local projective measurements that leave the state unperturbed.

One can generalize the concept of sensitivity even further to arbitrary two subclasses of linear maps $\mf{A}\subseteq\mf{B}\subseteq\mf{L}(A)$: a bipartite state $\rho_{AB}$ is sensitive to $\mf{B}$ up to $\mf{A}$ if $\mcal{F}_A(\rho_{AB})=\rho_{AB}$ implies $\mcal{F}\in\mf{A}$ for any $\mcal{F}\in\mf{B}$. For example, one can consider mixedness-sensitivity by setting $\mf{A}=\mf{UO}(A)$ and $\mf{B}=\mf{UC}(A)$ as it was studied in Ref. \cite{mendl2009unital}. One can consider it as a nonlinear generalized quantum resource witnesses which only captures the linear separation of quantum resource from free objects.

This type of generalized sensitivity finds its application in the study of catalytic quantum randomness \cite{lie2021dynamical}. In recent results of quantum resource theory, catalysts, quantum systems that aid transformation of quantum states or channels while staying in the same state, are of intense interest. Many of such catalysts are in a multipartite state (e.g. entanglement catalyst), thus characterizing which local quantum operation does not fix a given multipartite state will help us understand the mechanism of quantum catalysis.

\begin{figure}[t]
    \includegraphics[width=.47\textwidth]{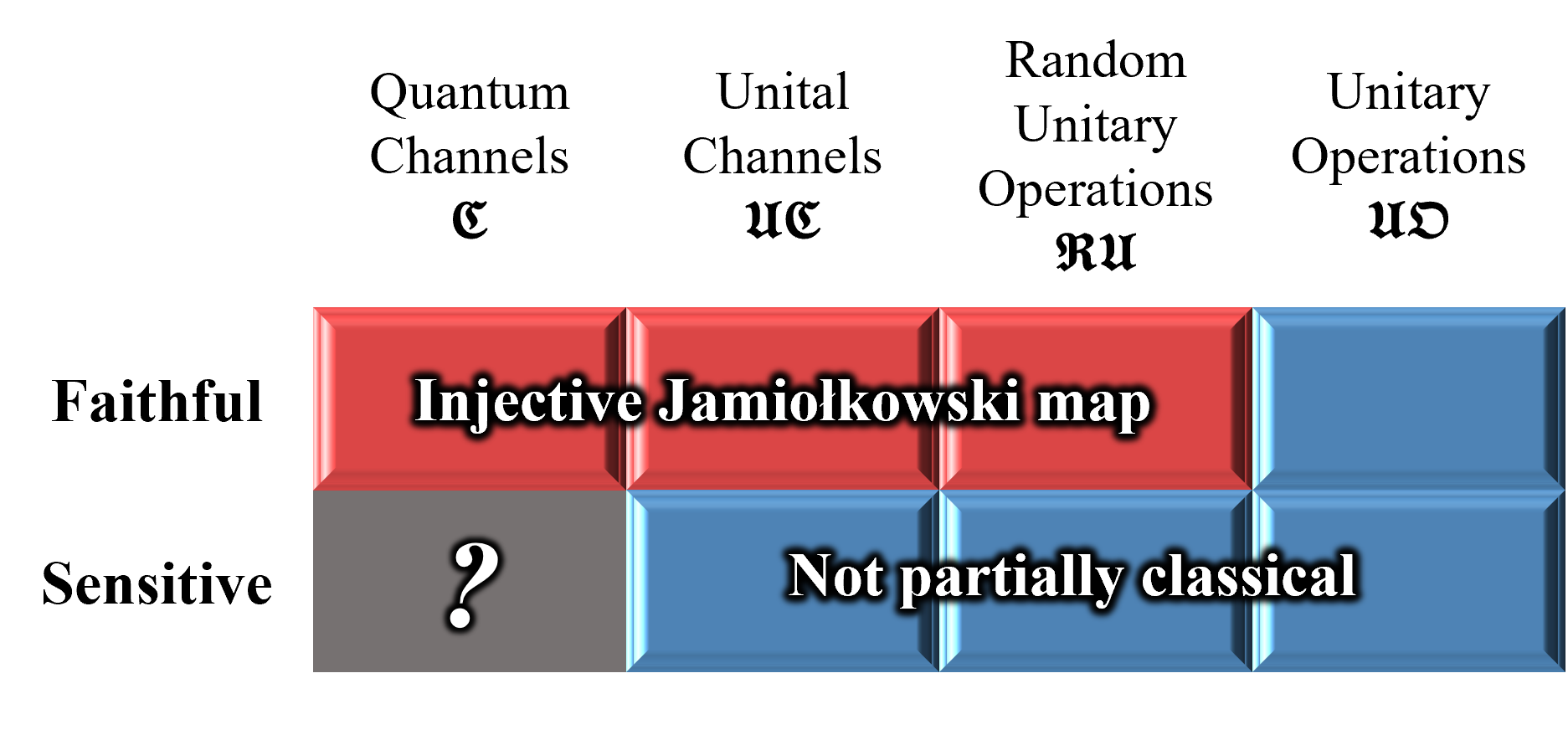}
    \caption{Equivalence of faithfulness and sensitivity to various subclasses of quantum channels. The characterization of sensitivity to general quantum channels is still unknown, but it should be logically in between of two other constraints given in the table.}\label{fig:table}
\end{figure}

\textit{Discussion}---\, We have completed the proof of a commonly used result in ancilla-assisted quantum process tomography, namely the characterization of tomographically faithful bipartite quantum states. In addition to this, we considered generalizations of faithfulness and its weaker variant sensitivity to various classes of quantum channels, and proved the equivalence between faithfulness to quantum channels, unital channels, and random unitary operations. As a result, we showed the equivalence of local nonclassicality (not being a PC-Q state) and sensitivity to unital channels.

The concepts of faithfulness and sensitivity are important even when one does not explicitly utilize reference systems in process tomography. The framework of AAPT unifies general strategies, hence it can even describe general process tomography methods that do not employ genuinely quantum ancillary systems. For example, the strategy of feeding an ensemble of input state states $\{\rho_i\}_{i=1}^{|A|^2}$ can be expressed as a single type of correlated bipartite input state $|A|^{-2}\sum_{i=1}^{|A|^2} \rho_i\otimes \dyad{i}_B$, where system $B$ functions as a classical register that remembers which input state was fed. Moreover, the equivalence of faithfulness and invertibility of corresponding \jami map even unifies both static and dynamic pictures of process tomography.

The characterization results also provide the characterization of states that are \textit{not} faithful or sensitive which are the free states in the resource theory where faithful or sensitive states are considered resourceful. The question of \textit{how} much the given state is faithful or sensitive can be studied in the constructed resource theory thereafter, and we anticipate the connection to be made with the field of quantum metrology in the future works.

It is rather surprising that faithfulness and sensitivity for many different classes of quantum channels can be expressed as one of only two mathematical characterizations as in FIG. \ref{fig:table}. However, the list is far from being complete. Notably the condition for sensitivity to quantum channels was not covered in this work, and it would be interesting to explore faithfulness and sensitivity of more various classes of quantum operations and find applications of theses results outside of process tomography.

\begin{acknowledgements}
We thank Giacomo D'Ariano for useful discussions. This work was supported by National Research Foundation of Korea grants funded by the Korea government (Grants No. 2019M3E4A1080074, No. 2020R1A2C1008609 and No. 2020K2A9A1A06102946) via the Institute of Applied Physics at Seoul National University and by the Ministry of Science and ICT, Korea, under the ITRC (Information Technology Research Center) support program (IITP-2020-0-01606) supervised by the IITP (Institute of Information \& Communications Technology Planning \& Evaluation). This work is also supported by the quantum computing technology development program of the National Research Foundation of Korea (NRF) funded by the Korean government (Ministry of Science and ICT (MSIT)) (No.2021M3H3A103657312).
\end{acknowledgements}
\bibliography{main}
\clearpage
\appendix
\widetext
\section{Mathematical proofs}
In addition to the notations introduced in the main text, we will use the following notations in this Appendix. For any $M\in\mf{B}(X\otimes Y)$, the partial transpose on system $X$ is denoted by $M^{T_X}$. For any $M\in \mf{B}(X,Y)$, we let $\Ad_M\in\mf{L}(X,Y)$ be $\Ad_M(K):=MKM^\dag$. The set of all quantum states on system $X$ will be denoted by $\mf{S}(X)$. Let $\mf{H}_0(X)$ be the space of all traceless Hermitian operators on Hilbert space $X$ and let $\mf{LH}(X,Y)$ be the space of Hermitian-preserving linear maps from $X$ to $Y$.

\subsection{Lemmas for Proposition \ref{thm:dariano}}
 For the construction of the proof of Proposition \ref{thm:dariano}, we need following key observations. First, we focus on the structure of the linear maps on the space of traceless Hermitian operators.

We first give a proof of Lemma \ref{applem:1}.
\begin{proof}
    Any Hermitian-preserving map has the expression $\mcal{C}$ of the form $\mcal{C}=\sum_i \lambda_i \Ad_{V_i}$ with real numbers $\lambda_i$ and some operators $\{V_i\}$ such that $\Tr[V_i^\dag V_j]=\delta_{ij}$. Note that we can set $\Tr[\mcal{C}(\rho)]=\Tr[\left(\sum_i\lambda_i V_i^\dag V_i\right)\rho]=0$ for every Hermitian $\rho$ by setting $\Tr[\mcal{C}(\mds{1})]=0$. It is possible because the value of $\mcal{C}(\mds{1})$ is not assigned as $\mcal{C}$ was defined only on traceless Hermitian operators. It follows that $\sum_i \lambda_i V_i^\dag V_i=0$. Let $S:=\{i:\lambda_i\geq0\}$. Let $M:=\sqrt{\alpha\mds{1}-\sum_{i\in S}\lambda_i V_i^\dag V_i}$ where $\alpha:=\|\sum_{i\in S} \lambda_i V_i^\dag V_i\|_\infty$ so that $\alpha\mds{1}\geq \sum_{i\in S}\lambda_i V_i^\dag V_i$. It follows that $\mcal{K}_0:=\alpha^{-1}\left(\sum_{i\in S} \lambda_i \Ad_{V_i}+\Ad_M\right)$ is a quantum channel for $\alpha^{-1}\sum_{i\in S}\lambda_i V_i^\dag V_i + \alpha^{-1} M^\dag M=\mds{1}$. Since $\sum_i \lambda_i V_i^\dag V_i=0$, we have that $\mcal{K}_1:=\alpha^{-1}\left(\sum_{i\notin S}|\lambda_i|\Ad_{V_i} + \Ad_M \right)$ is also a quantum channel. It follows that $\mcal{C}=\alpha\left(\mcal{K}_0-\mcal{K}_1\right)$.
\end{proof}

Interestingly, Lemma \ref{applem:1} holds even when we require the quantum channels to be invertible. If a linear map $\mcal{C}\in\mf{B}(\mf{H}_0(X))$ has an expression $\mcal{C}=\alpha(\mcal{K}_0-\mcal{K}_1)$ with some quantum channels $\mcal{K}_0$ and $\mcal{K}_1$, then there exists some $\epsilon>0$ such that both of $\mcal{K}'_i:=(1-\epsilon)\mcal{K}_i+\epsilon\;\id_X$ are invertible, thus we have $\mcal{C}=\alpha(1-\epsilon)^{-1}(\mcal{K}'_0-\mcal{K}'_1)$. Therefore, faithfulness is equivalent to faithfulness to invertible quantum channels. By the same argument, faithfulness to a convex subset of linear maps $\mf{Q}\in\mf{L}(X)$ with $\id_X\in\mf{Q}$ is equivalent to faithfulness of invertible linear maps in $\mf{Q}$.

Next, we remark that quantum channels are Hermitian-preserving maps, and the following Lemma helps us exploit the structure of Hermitian-preserving maps.
\begin{lemma} \label{applem:4}
    A Hermitian-preserving linear map that is injective (or surjective) on Hermitian operators is also injective (or surjective) on every bounded operators.
\end{lemma}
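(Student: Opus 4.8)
The plan is to exploit the fact that $\mf{B}(X)$ is the complexification of the real vector space $\mf{H}(X)$ of Hermitian operators, and that a Hermitian-preserving map is nothing but the complexification of its restriction to $\mf{H}(X)$. Concretely, every $M\in\mf{B}(X)$ decomposes uniquely as $M=A+iB$ with $A=(M+M^\dag)/2$ and $B=(M-M^\dag)/(2i)$ both Hermitian, so that $\mf{B}(X)=\mf{H}(X)\oplus i\,\mf{H}(X)$ as a real vector space; the same splitting holds on the output side $Y$.

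First I would record the compatibility of $\mcal{N}$ with this decomposition. For a Hermitian-preserving $\mcal{N}$ and $M=A+iB$ as above, linearity gives $\mcal{N}(M)=\mcal{N}(A)+i\,\mcal{N}(B)$, and since $A,B$ are Hermitian both $\mcal{N}(A)$ and $\mcal{N}(B)$ are Hermitian. Taking the Hermitian and anti-Hermitian parts of $\mcal{N}(M)$ therefore recovers exactly $\mcal{N}(A)$ and $i\,\mcal{N}(B)$, so the real/imaginary splitting of the output mirrors that of the input. This is the only structural observation needed.

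The injectivity claim then follows immediately: if $\mcal{N}$ is injective on $\mf{H}(X)$ and $\mcal{N}(M)=0$ for $M=A+iB$, extracting Hermitian and anti-Hermitian parts yields $\mcal{N}(A)=0$ and $\mcal{N}(B)=0$, whence $A=B=0$ and $M=0$. For surjectivity, given an arbitrary $N\in\mf{B}(Y)$ I write $N=C+iD$ with $C,D\in\mf{H}(Y)$; surjectivity of $\mcal{N}$ on $\mf{H}(X)$ supplies Hermitian preimages $A,B$ with $\mcal{N}(A)=C$ and $\mcal{N}(B)=D$, and then $\mcal{N}(A+iB)=N$. The parenthetical ``or'' cases are handled in parallel, so no separate argument is required.

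Since each step is a direct consequence of the complexification structure, there is no genuine obstacle here; the only point deserving care is verifying that Hermitian-preservation truly aligns the Hermitian/anti-Hermitian decompositions of input and output, which is precisely what licenses the reduction to the restriction $\mcal{N}|_{\mf{H}(X)}$. Everything else is the standard linear-algebraic fact that a real-linear map and its complexification share both injectivity and surjectivity.
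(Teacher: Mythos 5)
Your proof is correct and follows essentially the same route as the paper's: decompose an arbitrary operator into Hermitian and anti-Hermitian parts, use Hermitian-preservation to align input and output decompositions, and transfer injectivity/surjectivity from $\mf{H}(X)$ to $\mf{B}(X)$. Your explicit justification that $\mcal{N}(A)$ and $i\,\mcal{N}(B)$ are precisely the Hermitian and anti-Hermitian parts of $\mcal{N}(M)$ is a small but welcome clarification of a step the paper leaves implicit.
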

\begin{proof}
    We remark that every $A\in\mf{B}(X)$ can be decomposed into $A_R+iA_I$ where $A_R=(A+A^\dag)/2$ and $A_I=-i(A-A^\dag)/2$ with $A_R,A_I\in\mf{H}(X)$. Assume that $\mcal{L}$ is defined on $\mf{B}(X)$ and injective when limited to $\mf{H}(X)$. If $\mcal{L}(A)=0$, then $\mcal{L}(A_R)+i\mcal{L}(A_I)=0$ and it follows that $\mcal{L}(A_R)=\mcal{L}(A_I)=0$, which in turn implies that $A_R=A_I=0$. Therefore $A=0$, from which we conclude that $\mcal{L}$ is injective on $\mf{B}(X)$.
    
    Next, assume that $\mcal{T} \in \mf{LH}(X,Y)$ and surjective when limited to $\mf{H}(X)$. It means that for every $B\in \mf{H}(Y)$ there exists $A\in\mf{H}(X)$ such that $\mcal{T}(A)=B$. Hence, for any $D=D_R+iD_I\in\mf{B}(Y)$, there exist $C_R$ and $C_I$ in $\mf{H}(X)$ such that $\mcal{T}(C_R)=D_R$ and $\mcal{T}(C_I)=D_I$. It follows that for $C$ given as $C=C_R+iC_I$, we have $\mcal{T}(C)=D$, thus $\mcal{T}$ is also surjective as a linear map from $\mf{B}(X)$ to $\mf{B}(Y)$.
    
\end{proof}

\begin{lemma} \label{applem:2}
    Let $\mcal{R}\in\mf{B}(X,Y)$. If $\mcal{R}\circ\mcal{N}=0$ implies $\mcal{N}=0$ for any $\mcal{N}\in\mf{B}(X)$, then $\mcal{R}$ is injective. Similarly, if $\mcal{N}\circ \mcal{R}=0$ implies $\mcal{N}=0$ for any $\mcal{N}\in\mf{B}(Y)$, then $\mcal{R}$ is surjective.
\end{lemma}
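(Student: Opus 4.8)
The plan is to prove both implications by contraposition, in each case exhibiting an explicit nonzero rank-one map $\mcal{N}$ that witnesses the failure of the stated cancellation property. The two halves are exactly dual, so I would carry them out in parallel.

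For the first statement, suppose $\mcal{R}$ is \emph{not} injective, so that $\ker\mcal{R}$ contains some $v\neq0$. I would fix any nonzero linear functional $\phi$ on $X$ (concretely $\phi=\langle u,\cdot\rangle$ for a fixed $u\neq0$ relative to the inner product on $X$) and define the rank-one endomorphism $\mcal{N}\in\mf{B}(X)$ by $\mcal{N}(x)=\phi(x)\,v$. Then $\mcal{N}\neq0$, yet $(\mcal{R}\circ\mcal{N})(x)=\phi(x)\,\mcal{R}(v)=0$ for every $x$ because $v\in\ker\mcal{R}$. This contradicts the hypothesis that $\mcal{R}\circ\mcal{N}=0$ forces $\mcal{N}=0$, so $\mcal{R}$ must be injective.

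For the second statement, suppose $\mcal{R}$ is \emph{not} surjective, so that $\operatorname{range}(\mcal{R})$ is a proper subspace of the finite-dimensional space $Y$. Then there is a nonzero functional $\psi$ on $Y$ annihilating $\operatorname{range}(\mcal{R})$; concretely $\psi=\langle z,\cdot\rangle$ for any nonzero $z\in\operatorname{range}(\mcal{R})^\perp$. Fixing any $w\neq0$ in $Y$, I would define $\mcal{N}\in\mf{B}(Y)$ by $\mcal{N}(y)=\psi(y)\,w$. Then $\mcal{N}\neq0$, while $(\mcal{N}\circ\mcal{R})(x)=\psi(\mcal{R}(x))\,w=0$ for every $x$ since $\mcal{R}(x)\in\operatorname{range}(\mcal{R})$. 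This contradicts the hypothesis that $\mcal{N}\circ\mcal{R}=0$ forces $\mcal{N}=0$, so $\mcal{R}$ must be surjective.

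The content is elementary linear algebra, and the only step demanding a little care is the surjective half, where I must secure a nonzero functional that vanishes on the range; this is exactly where finite-dimensionality is used, via the decomposition $Y=\operatorname{range}(\mcal{R})\oplus\operatorname{range}(\mcal{R})^\perp$. As a consistency check I would note that the second statement can alternatively be reduced to the first: $\mcal{N}\circ\mcal{R}=0$ is equivalent to $\mcal{R}^\dag\circ\mcal{N}^\dag=0$, and since $\mcal{N}\mapsto\mcal{N}^\dag$ is a bijection of $\mf{B}(Y)$, applying the first statement to $\mcal{R}^\dag$ yields injectivity of $\mcal{R}^\dag$, whence $\mcal{R}$ is surjective by Lemma~\ref{applem:5}.
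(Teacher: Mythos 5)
Your proof is correct and follows essentially the same route as the paper: argue by contraposition and exhibit an explicit nonzero $\mcal{N}$ killed by the composition. The only (immaterial) difference is the choice of witness --- the paper takes $\mcal{N}$ to be the orthogonal projection onto $\ker\mcal{R}$ (resp.\ onto $\Im{\mcal{R}}^\perp$), whereas you use a rank-one map built from a functional; both work identically in finite dimensions.
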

\begin{proof}
    For the first part, let us assume that $\mcal{R}$ is not injective, thus $\ker{\mcal{R}}\neq0$. Let $\mcal{N}$ be the projection onto $\ker{\mcal{R}}$. Then, even though $\mcal{N}$ is not trivial, $\mcal{R}\circ\mcal{N}=0$. For the second part, let us assume that $\mcal{R}$ is not surjective, i.e. $\Im{\mcal{R}}\neq Y$. In this case, if we let $\mcal{N}$ be the projection onto $\Im{\mcal{R}}^\perp$, then, even though $\mcal{N}\neq 0$, we have $\mcal{N}\circ\mcal{R}=0$.
\end{proof}

\begin{lemma} \label{applem:3}
    If a Hermitian-preserving linear map $\mcal{K}\in \mf{LH}(X,Y)$ such that $\Tr[M]=0$ implies $\Tr[\mcal{K}(M)]=0$ is injective (or surjective onto $\mf{H}_0(Y)$) when limited on $\mf{H}_0(X)$, then it is also injective (or surjective onto $\mf{H}(Y)$) when limited on $\mf{H}(X)$  if $\Tr[\mcal{K}(\mds{1}_X)]\neq0$.
\end{lemma}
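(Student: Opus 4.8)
The plan is to exploit the orthogonal direct-sum decomposition $\mf{H}(X)=\mf{H}_0(X)\oplus\mds{R}\,\mds{1}_X$, under which every Hermitian operator $M$ splits uniquely as $M=M_0+c\,\mds{1}_X$ with $M_0\in\mf{H}_0(X)$ traceless and $c=\Tr[M]/|X|$, and likewise for $Y$. The trace hypothesis (that $\Tr[M]=0$ implies $\Tr[\mcal{K}(M)]=0$) says precisely that $\mcal{K}$ carries $\mf{H}_0(X)$ into $\mf{H}_0(Y)$, so the action of $\mcal{K}$ on all of $\mf{H}(X)$ is pinned down by its restriction to $\mf{H}_0(X)$ together with the single operator $\mcal{K}(\mds{1}_X)$. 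I would record $\mcal{K}(\mds{1}_X)=K_0+d\,\mds{1}_Y$ with $K_0\in\mf{H}_0(Y)$ and $d=\Tr[\mcal{K}(\mds{1}_X)]/|Y|$, observing that the hypothesis $\Tr[\mcal{K}(\mds{1}_X)]\neq0$ is exactly the statement $d\neq0$.

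For the injective case, suppose $\mcal{K}(M)=0$ for $M=M_0+c\,\mds{1}_X$, so that $\mcal{K}(M_0)+c\,\mcal{K}(\mds{1}_X)=0$. Taking the trace and using that $\mcal{K}(M_0)$ is traceless leaves $c\,\Tr[\mcal{K}(\mds{1}_X)]=0$, whence $c=0$ because $\Tr[\mcal{K}(\mds{1}_X)]\neq0$. This reduces the identity to $\mcal{K}(M_0)=0$ with $M_0\in\mf{H}_0(X)$, and injectivity of $\mcal{K}$ on $\mf{H}_0(X)$ forces $M_0=0$, hence $M=0$; so $\mcal{K}$ is injective on $\mf{H}(X)$.

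For the surjective case, I would take an arbitrary target $N=N_0+e\,\mds{1}_Y\in\mf{H}(Y)$ with $N_0\in\mf{H}_0(Y)$ and first cancel its scalar part using $\mds{1}_X$: the operator $N-\tfrac{e}{d}\mcal{K}(\mds{1}_X)=N_0-\tfrac{e}{d}K_0$ is traceless, so by surjectivity of $\mcal{K}$ onto $\mf{H}_0(Y)$ there exists $Q\in\mf{H}_0(X)$ with $\mcal{K}(Q)=N_0-\tfrac{e}{d}K_0$. Setting $M:=Q+\tfrac{e}{d}\mds{1}_X\in\mf{H}(X)$ then yields $\mcal{K}(M)=\mcal{K}(Q)+\tfrac{e}{d}(K_0+d\,\mds{1}_Y)=N_0+e\,\mds{1}_Y=N$, which establishes surjectivity onto $\mf{H}(Y)$.

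The only genuinely nontrivial point, and the place I expect to have to argue carefully, is that $\mcal{K}(\mds{1}_X)$ need not be a pure multiple of $\mds{1}_Y$: its traceless contamination $K_0$ prevents one from naively using $\mds{1}_X$ alone to reach the scalar direction in the codomain. This is exactly why the surjectivity of the restricted map on $\mf{H}_0$ must be invoked to reabsorb $K_0$, and why the division by $d$ (hence the assumption $d\neq0$) is unavoidable; the same nonvanishing is what makes the trace computation in the injective case conclusive. I would emphasize that both implications fail without $\Tr[\mcal{K}(\mds{1}_X)]\neq0$, since then the entire image of $\mf{H}(X)$ would collapse into $\mf{H}_0(Y)$, so this hypothesis is not merely technical but sharp.
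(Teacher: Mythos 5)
Your proof is correct and is essentially the fully worked-out version of the paper's own (one-line) argument, which likewise rests on the decomposition $\mf{H}(X)=\mf{H}_0(X)\oplus\text{Span}_{\mds{R}}(\{\mds{1}_X\})$ together with the observation that $\Tr[\mcal{K}(\mds{1}_X)]\neq0$ keeps $\mcal{K}(\mds{1}_X)$ out of $\mf{H}_0(Y)$. The only quibble is your closing remark that \emph{both} implications fail without this hypothesis: surjectivity indeed collapses, but injectivity on $\mf{H}(X)$ can survive $\Tr[\mcal{K}(\mds{1}_X)]=0$ provided $\mcal{K}(\mds{1}_X)$ happens to lie outside $\mcal{K}(\mf{H}_0(X))$, so the hypothesis is sharp only for the surjective half.
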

\begin{proof}
    It follows immediately from the fact that $\mf{H}(X)=\mf{H}_0(X)\oplus\text{Span}_{\mds{R}}(\{\mds{1}_X\})$ and that $\Tr[\mcal{K}(\mds{1}_X)]\neq0$ implies $\mcal{K}(\mds{1}_X)\notin \mf{H}_0(Y)$.
\end{proof}

\subsection{Proof of Proposition \ref{thm:dariano}}

\begin{proof}
    We remark that $\mcal{R}\in\mf{C}(X,Y)\subseteq\mf{LH}(X,Y)$ and $\mcal{R}(\mf{H}_0(X))\subseteq\mf{H}_0(Y)$.  Now, from the assumption, we claim that $\mcal{R}\circ\mcal{K}=0$ implies $\mcal{K}=0$ for every $\mcal{K}\in\mf{B}(\mf{H}_0(X))$ (or $\mcal{P}\circ\mcal{R}=0$ implies $\mcal{P}=0$ for every $\mcal{P}\in\mf{B}(\mf{H}_0(Y))$) because of Lemma \ref{applem:1}. By Lemma \ref{applem:2}, it follows that $\mcal{R}$ is injective (or surjective onto $\mf{H}_0(Y)$) on $\mf{H}_0(X)$. And since $\mcal{R}\in\mf{C}(X,Y)$, we have $\Tr[\mcal{R}(\mds{1}_X)]=|X|\neq 0$, thus, from Lemma \ref{applem:3}, it follows that $\mcal{R}$ is injective (or surjective onto $\mf{H}(Y)$) on $\mf{H}(X)$.  Finally, by Lemma \ref{applem:4}, $\mcal{R}$ is also an injection (or a surjection) on $\mf{B}(X)$.
\end{proof}
\subsection{Proof of Corollary \ref{appco:1}}
\begin{proof}
   Since $\Tr[\mcal{R}(\rho)]=\Tr[\mcal{R}^\dag(\mds{1}_X)\rho]>0$ for any nonzero $\rho\geq0$, we have $\mcal{R}^\dag(\mds{1}_X)>0$, thus $\mcal{R}^\dag(\mds{1}_X)^{-1}$ exists. Let $\mcal{R'}:=\mcal{R}\circ\Ad_{\mcal{R}^\dag(\mds{1}_X)^{-1/2}}.$ Note that $\mcal{R}'$ is a quantum channel. Since $\Ad_{\mcal{R}^\dag(\mds{1}_X)^{-1/2}}$ is invertible, if $\mcal{N}\circ\mcal{R}=\mcal{M}\circ\mcal{R}$ implies $\mcal{N}=\mcal{M}$, then $\mcal{N}\circ\mcal{R'}=\mcal{M}\circ\mcal{R'}$ also implies $\mcal{N}=\mcal{M}$. It follows that $\mcal{R}'$ is surjective, hence $\mcal{R}$ is also surjective.
\end{proof}
\subsection{Proof of Lemma \ref{applem:5}}
\begin{proof}
    If follows from the observation that $\Im{\mcal{R}}^\perp=\ker{\mcal{R}^\dag}$. Since $\mcal{R}$ is surjective if and only if $\Im{\mcal{R}}=Y$ and $\mcal{R}^\dag$ is injective if and only if $\ker{\mcal{R}^\dag}=\{0\}$, the desired result follows. Note that the injectivity of $\mcal{R}^\dag$ and that of $\mcal{R}^T$ are equivalent since $\mcal{R}^\dag=(\mcal{R}^T)^*$ and the complex conjugation is bijective.
\end{proof}
\subsection{Proof of Theorem \ref{thm:equiv}}
\begin{proof}
    Note that by showing the equivalence of faithfulnesses to quantum channels and random unitary channels, one also shows the equivalence with that to unital channels because of the inclusion relation between those classes of quantum channels. Obviously a quantum state faithful to quantum channels is faithful to any subset of quantum channels. We now show the converse. Let
    \begin{equation}
        \text{zpan}_\mds{R}(\mcal{S}):=\left\{\sum_i \lambda_is_i:\lambda_i\in \mds{R}, \sum_i\lambda_i=0, \text{and } s_i \in \mcal{S}\right\}.
    \end{equation}
    It was shown in Ref. \cite{mendl2009unital} that $\text{zpan}_\mds{R}(\mf{UO}(X))=\mf{B}(\mf{H}_0(X)).$ Note that $\zpan{\mf{UO}(X)}=\{\alpha(\mcal{N}-\mcal{M}):\alpha>0\text{ and }\mcal{N},\mcal{M}\in\mf{RU}(X)\}$. By following the same argument of the proof of Proposition \ref{thm:dariano}, we can prove that $\mcal{R}\in\mf{C}(X,Y)$ is surjective if and only if $\mcal{N}\circ\mcal{R}=\mcal{M}\circ\mcal{R}$ implies $\mcal{N}=\mcal{M}$ for any $\mcal{N},\mcal{M}\in\mf{RU}(Y)$. Therefore, following the same logic with the proof of Theorem \ref{thm:dariano2}, it follows that a bipartite state $\rho_{AB}$ is faithful to random unitary operations if and only if its \jami map $\rho_{A\to B}$ is left invertible. In conclusion, faithfulness to quantum channels and faithfulness to random unitary operations are equivalent.
\end{proof}
\subsection{Proof of Proposition \ref{prop:unitinv}}
\begin{proof}
    Let $\rho_{AB}=1/2(\sum_{i=1}^d\lambda_i\dyad{i}_A)\otimes\dyad{0}_B+1/2\dyad{+}_A\otimes\dyad{1}_B$ in $\mf{S}(\mds{C}^d\otimes \mds{C}^2)$, where $\sum_{i=1}^d\lambda_i\dyad{i}_A$ is a non-degenerate quantum state, and $\ket{+}_A=|A|^{-1/2}\sum_{i=1}^d\ket{i}_A.$ One can see that $\rho_{AB}$ is faithful to unitary operations on $A$ as the pair $(\sum_{i=1}^d\lambda_i\dyad{i}_A,\dyad{+}_A)$ jointly transforms uniquely for each unitary operation. However, the corresponding \jami map $\rho_{B\to A}(\sigma)=1/2(\sum_i\lambda_{i=1}^d\dyad{i}_A)\bra{0}\sigma\ket{0}+1/2\dyad{+}\bra{1}\sigma\ket{1}$ cannot be surjective since its image is $\text{span}_\mds{C}\left\{\sum_i\lambda_{i=1}^d\dyad{i}_A, \dyad{+}_A\right\}$ while $\mf{B}(A)$ is $|A|^2$-dimensional complex vector space with $|A|\geq 2$. Thus, according to Lemma \ref{applem:5}, $\rho_{A\to B}$ cannot be injective.
\end{proof}
\subsection{Proof of Proposition \ref{prop:nonequiv}}
\begin{proof}
    Consider the same $\rho_{AB}$ of the proof of Proposition \ref{prop:unitinv}. We claim that this state is sensitive to unital channels on $A$ too. To show this, we first remark that if a unital channel $\mcal{N}\in\mf{UC}(A)$ preserves $\rho_{AB}$, then it preserves both $\sum_{i=1}^d\lambda_i\dyad{i}_A$ and $\dyad{+}_A$. But, if unital channel preserves a quantum state, then it also preserves the eigenspace corresponding to each eigenvalue \cite{burgarth2013ergodic,lie2020uniform,lie2021correlational}, thus $\mcal{N}$ also preserves every projector onto basis element $\dyad{i}$. It fits the definition of (generalized) dephasing maps, and it is known \cite{kye1995positive,li1997special} that they can be expressed as a Schur-product channel $(M_{ij})\mapsto (C_{ij}M_{ij})$ with a correlation matrix $C$ (meaning $C\geq0$ and $C_{ii}=1$ for every $i$). However, since $\mcal{N}$ must also preserve $\dyad{+}_A$, whose every matrix component is $|A|^{-1}$ so that it is mapped to $(C_{ij}|A|^{-1})$ by the Schur product with $C$, the corresponding correlation matrix should have 1 for every component. It is equivalent to saying that $\mcal{N}=\id_A$. Since every random unitary operation is unital, the same argument also holds for it too. Nonetheless, by Theorem \ref{thm:equiv} and Proposition \ref{prop:unitinv}, $\rho_{AB}$ is neither faithful to unital channels nor random unitary operations.
\end{proof}
\subsection{Proof of Theorem \ref{thm:PCQcond}}
We first prove the following Lemma.
\begin{lemma} \label{prop:sensun}
    A bipartite quantum state $\rho_{AB}$ is sensitive to unital channels on $A$ if and only if for any $M\in\mf{B}(A)$, $[M_A\otimes\mds{1}_B,\rho_{AB}]=0$ implies $M_A=\alpha\mds{1}_A$ for some complex number $\alpha$. The same result holds for unitary operations with any $M\in\mf{U}(A)$ instead of $\mf{B}(A)$.
\end{lemma}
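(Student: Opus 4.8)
The plan is to translate the dynamical condition ``no nontrivial unital channel fixes $\rho_{AB}$'' into the static commutant condition by passing through the Kraus representation of a fixing channel. I would treat the unitary case first, as it is purely definitional: writing a unitary operation as $\Ad_U\in\mf{UO}(A)$ with $U\in\mf{U}(A)$, the fixing equation $(\Ad_U\otimes\id_B)(\rho_{AB})=\rho_{AB}$ is literally $[U\otimes\mds{1}_B,\rho_{AB}]=0$, while $\Ad_U=\id_A$ holds exactly when $U$ lies in the center of $\mf{B}(A)$, i.e. $U=\alpha\mds{1}_A$. Thus sensitivity to unitary operations is verbatim the asserted condition for $M\in\mf{U}(A)$, the unitarity of $M$ merely forcing $|\alpha|=1$.

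The substance is the unital case, and its engine is the following fact, which I would establish directly (it is the fixed-point statement cited in the proof of Proposition~\ref{prop:nonequiv} \cite{burgarth2013ergodic,lie2020uniform,lie2021correlational}): if $\mcal{E}(X)=\sum_iK_iXK_i^\dag$ is unital and trace preserving and fixes a Hermitian operator $X$, then $[K_i,X]=0$ for every $i$. To see this I would consider the positive-semidefinite operator $\sum_i[K_i,X]^\dag[K_i,X]$, expand it into four terms, and take the trace; using trace preservation ($\sum_iK_i^\dag K_i=\mds{1}$) for the first term, unitality ($\Tr[\sum_iK_i^\dag X^2K_i]=\Tr[X^2\sum_iK_iK_i^\dag]=\Tr[X^2]$) for the fourth, and the fixing hypothesis ($\Tr[X\sum_iK_i^\dag XK_i]=\Tr[X\mcal{E}^\dag(X)]=\Tr[\mcal{E}(X)X]=\Tr[X^2]$) for the two middle terms, each of the four resulting traces equals $\pm\Tr[X^2]$ and they sum to zero. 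Hence a sum of positive-semidefinite operators has vanishing trace, so each summand vanishes. This is the only step that genuinely uses unitality, and I expect it to be the main obstacle.

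Granting this lemma, the direction ``commutant condition $\Rightarrow$ sensitivity'' is immediate: if $\mcal{E}\in\mf{UC}(A)$ fixes $\rho_{AB}$, then $\mcal{E}\otimes\id_B$ is unital and trace preserving with Kraus operators $\{K_i\otimes\mds{1}_B\}$, so the lemma gives $[K_i\otimes\mds{1}_B,\rho_{AB}]=0$; the hypothesis then forces each $K_i=\alpha_i\mds{1}_A$, and $\sum_i|\alpha_i|^2=1$ yields $\mcal{E}=\id_A$. For the converse I would argue contrapositively: given a non-scalar $M$ with $[M\otimes\mds{1}_B,\rho_{AB}]=0$, taking adjoints shows its Hermitian and anti-Hermitian parts also commute with $\rho_{AB}$, so I may take $M=M^\dag$ non-scalar; its spectral projectors $\{P_k\}$ (at least two, being polynomials in $M$) satisfy $[P_k\otimes\mds{1}_B,\rho_{AB}]=0$, whence the pinching $\mcal{P}=\sum_k\Ad_{P_k}$ is a nontrivial unital channel with $(\mcal{P}\otimes\id_B)(\rho_{AB})=\sum_k(P_k\otimes\mds{1}_B)\rho_{AB}(P_k\otimes\mds{1}_B)=\sum_k(P_k\otimes\mds{1}_B)\rho_{AB}=\rho_{AB}$, contradicting sensitivity. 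The unitary version of the lemma needs only the definitional equivalence of the first paragraph, which is consistent with the fact that the obstruction $\mcal{P}$ produced here is unital but in general not unitary.
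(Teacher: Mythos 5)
Your proof is correct, and the overall architecture matches the paper's: both directions hinge on the fact that a state fixed by a unital channel commutes with every Kraus operator. The differences are in how the two halves are executed. For the ``commutant $\Rightarrow$ sensitivity'' direction, the paper simply cites the fixed-point result of Arias--Gheondea--Gudder, whereas you reprove it via the vanishing trace of $\sum_i[K_i,X]^\dag[K_i,X]$; your version is self-contained and makes visible exactly where unitality enters, at the cost of a page of computation the paper outsources. For the converse, the paper argues directly: from a non-scalar commuting $M$ it builds the two-Kraus unital channel $\Ad_{M_k}+\Ad_{M'_k}$ with $M'_k=(\mds{1}-M_k^2)^{1/2}$ and then deduces $\Ad_{M_k}\propto\id_A$ from $\mcal{I}_k=\id_A$ (a step that implicitly uses the extremality of the identity channel, i.e.\ that its Choi matrix is rank one). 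You instead pass to the spectral projectors of a Hermitian part of $M$ and exhibit the pinching $\sum_k\Ad_{P_k}$ as a nontrivial unital channel fixing $\rho_{AB}$, which sidesteps that extremality step entirely and, as a bonus, lands directly on the projective-measurement picture the paper only reaches later in the proof of Theorem~\ref{thm:PCQcond}. Your observation that the unitary case is purely definitional is also a mild simplification over the paper, which routes it through the unital-channel machinery. Both arguments are sound; yours is somewhat more elementary and more explicit.
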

\begin{proof}
    A quantum state is fixed by a unital channel if and only if it commutes with every Kraus operator of the quantum channel \cite{arias2002fixed}. Therefore, if $[M_A\otimes\mds{1}_B,\rho_{AB}]=0$ implies $M_A=\alpha\mds{1}_A$ for any $M\in\mf{B}(A)$ with some complex number $\alpha$, then $\mcal{E}_A(\rho_{AB})=\rho_{AB}$ if and only if $\mcal{E}_A=\id_A$ for any $\mcal{E}\in\mf{UC}(A)$ because it means that every Kraus operator of $\mcal{E}$ commutes with $\rho_{AB}$, thus is proportional to the identity operator. The same argument can be applied to unitary operations because they are also unital channels.
    
    Conversely, assume that $\rho_{AB}$ is sensitive to unital channels on $A$ and let $M\in\mf{B}(A)$ be an arbitrary matrix. Without loss of generality we assume $M\neq0$. If $\norm{M}_\infty>1$, then redefine $M\to M/\norm{M}_\infty$ so that $\norm{M}_\infty\leq1$. Suppose that $[M\otimes \mds{1}_B,\rho_{AB}]=0$. It implies both $[M_R\otimes \mds{1}_B,\rho_{AB}]=0$ and $[M_I\otimes \mds{1}_B,\rho_{AB}]=0$ because $\rho_{AB}$ is Hermitian (See the proof of Lemma \ref{applem:4}). Let $M'_k:=(\mds{1}_A-M_k^2)^{1/2}$ for $k=R,I$. Note that $[M'_k\otimes \mds{1}_B,\rho_{AB}]=0$ for $k=R,I$ since $M_k$ and $M'_k$ share the eigenbasis. After doing this, we can define unital channels $\mcal{I}_k:=\Ad_{M_k} + \Ad_{M'_k}$ for $k=R,I$ that preserve $\rho_{AB}$. By the sensitivity of $\rho_{AB}$, it follows that both of $\mcal{I}_k$ are the identity operation. It implies that $\Ad_{M_k}\propto \id_A$, hence $M_k\propto \mds{1}_A$ for both $k=R,I$ and consequently $M=\alpha \mds{1}_A$ for some $\alpha$.
    
    Similarly, assume that $\rho_{AB}$ is sensitive to unitary operations on $A$ and let $U\in\mf{U}(A)$ be an arbitrary unitary operator on $A$. Suppose that $[U_A\otimes \mds{1}_B,\rho_{AB}]=0$. Then, the unitary operation $\Ad_U$ on $A$ fixes $\rho_{AB}$, thus $\Ad_U=\id_A$ because of the sensitivity of $\rho_{AB}$. It implies that $U\propto \mds{1}_A$.
\end{proof}
Now we prove Theorem \ref{thm:PCQcond}.
\begin{proof}
    Here we claim that $\rho_{AB}$ is not a PC-Q state if and only if $[U_A\otimes\mds{1}_B,\rho_{AB}]=0$ implies $U_A=\alpha\mds{1}_A$ for any $U\in\mf{U}(A)$. First, if $\rho_{AB}$ satisfies (\ref{eqn:PCQ}), then unitary operator $\mds{1}_A-2\Pi_1$ commutes with $\rho_{AB}$, but it is not proportional to the identity operator. Conversely, if there is a unitary operator with the spectral decomposition $U=\sum_j \exp(i\theta_j)\Pi_j$ that  is not proportional to the identity operator and commutes with $\rho_{AB}$, then $[\Pi_i\otimes\mds{1}_B,\rho_{AB}]=0$ for all $i$, so (\ref{eqn:PCQ}) holds.
    
    Next, we prove that $\rho_{AB}$ is not a PC-Q state if and only if $[M_A\otimes\mds{1}_B,\rho_{AB}]=0$ implies $M_A=\alpha\mds{1}_A$ for any $M\in\mf{B}(A)$. Note that unitary operator is just a special case of general operator $(\mf{U}(A)\subseteq\mf{B}(A))$, hence one direction is already proved in the last paragraph. Conversely, assume that there exists $M\not\propto \mds{1}_A$ such that $[M_A\otimes\mds{1}_B,\rho_{AB}]=0$. It follows that at least one of $M_R$ and $M_I$ is not proportional to $\mds{1}_A$ and they both commute with $\rho_{AB}$, but they are Hermitian in contrast to $M$ itself. Therefore, at least one of $M_R$ and $M_I$ has the nontrivial spectral decomposition of the form of $\sum_i m_i \Pi_i$, and $\rho_{AB}$ commutes with every $\Pi_i$, i.e., $[\Pi_i\otimes\mds{1}_B,\rho_{AB}]=0$ for all $i$. It follows that (\ref{eqn:PCQ}) holds.
\end{proof}

\end{document}